\NewDocumentCommand{\statcirc}{ O{#2} m }{%
    \begin{tikzpicture}
    \fill[#2] (0,0) circle (1.0ex); 
    \fill[#1] (0,0) -- (180:1ex) arc (180:0:1ex) -- cycle; 
    \end{tikzpicture}
}
\algnewcommand{\LineComment}[1]{\State \(\triangleright\) #1}
\newcommand{\para}[1]{\vspace{1pt}\noindent\textbf{#1}}
 \gdef\tfn@fnt{0}%
\pgfplotsset{compat=1.18}
\newtheorem{game}[theorem]{Game}
\newcommand{\setup}{{\sf Setup}}
\newcommand{\sgen}{{\sf ShareGen}}
\newcommand{\eval}{{\sf Eval}}
\newcommand{\peval}{{\sf PEval}}
\newcommand{\pver}{{\sf PVer}}
\newcommand{\comb}{{\sf Comb}}
\newcommand{\share}{{\tt SHARE}}
\newcommand{\PREFIN}{{\tt PREFIN}}
\newcommand{\bsigma}{\bm{\sigma}}
\newcommand{\pk}{{\sf pk}}
\newcommand{\val}{{\sf val}}
\newcommand{\m}{\ensuremath{m}}
\newcommand{\negl}{{\sf negl}}
\newcommand{\sch}[1]{\llbracket{#1}\rrbracket}
\newcommand{\one}{\vspace{1mm}}
\newcommand{\N}{\mathbb{N}}
\newcommand{\cE} {\mathcal{E}}
\newcommand{\cC} {\mathcal{C}}
\newcommand{\cS} {\mathcal{S}}
\newcommand{\cX} {\mathcal{X}}
\newcommand{\cK} {\mathcal{K}}
\newcommand{\ver}{{\sf Verify}}
\newcommand{\adv}{\ensuremath{\mathcal A}}
\newcommand{\cH}{{\mathcal H}}
\newcommand{\cM}{{\mathcal M}}
\newcommand{\rbcma}{\ensuremath{{\sf RB\mhyphen CMA}_{\tc}^\adv}}
\newcommand{\rbcmap}{\rbcma(\lambda)}
\newcommand{\upcma}{\ensuremath{{\sf UP\mhyphen CMA}_{\tc}^\adv}}
\newcommand{\upcmap}{\upcma(\lambda)}
\newcommand{\pp}{pp}
\mathchardef\mhyphen="2D
\newcommand{\ths}{t_{\sf sec}}
\newcommand{\thr}{t_{\sf rec}}
\newcommand{\thf}{t_{\sf fin}}
\newcommand{\grt}{{\sf GRT}}
\newcommand{\gft}{{\sf GFT}}
\newcommand{\gftr}{\gft_{\rd}}
\newcommand{\grtr}{\grt_{\rd}}
\newcommand{\latency}{{\sf L}}
\newcommand{\longtcs}{$(n,\ths,\thr)$-threshold cryptosystem\xspace}
\newcommand{\posrand}{Das et al.\xspace}
\newcommand{\name}{\ensuremath{\sf BTC}\xspace}
\newcommand{\longname}{blockchain-native threshold cryptosystem\xspace}
\newcommand{\nameft}{\ensuremath{\sf BTC_{FT}}\xspace}
\newcommand{\longnameft}{blockchain-native threshold cryptosystem with finalization threshold\xspace}
\newcommand{\MBB}{Multi-shot Byzantine Broadcast\xspace}
\newcommand{\MBBFT}{\MBB with finalization threshold\xspace}
\newcommand{\round}{round\xspace}
\newcommand{\rounds}{rounds\xspace}
\newcommand{\rd}{\ensuremath{r}\xspace}
\newcommand{\tc}{\ensuremath{\sf TC}\xspace}
\newcommand{\mbb}{\ensuremath{\sf MBB}\xspace}
\newcommand{\mbbft}{\ensuremath{\sf MBB_{FT}}\xspace}
\newcommand{\aptos}{Aptos\xspace}
\newcommand{\jolteon}{Jolteon\xspace}
\newcommand{\bcast}{{\sf bcast}}
\newcommand{\fin}{{\sf finalize}}
\newcommand{\queue}{\ensuremath{queue}}
\newcommand{\slow}{slow path\xspace}
\newcommand{\fast}{fast path\xspace}
\newcommand{\fastrand}{fast-path\xspace}
\begin{document}
\newif\ifsubmission

\title{The Latency Price of Threshold Cryptosystem \\ in Blockchains}
%

\author{
Zhuolun Xiang\inst{1} \and Sourav Das\inst{2} \and Zekun Li\inst{1}\and Zhoujun Ma\inst{1}\and Alexander Spiegelman\inst{1}}
\authorrunning{Z. Xiang et al.}
%
\institute{Aptos Labs \and
University of Illinois Urbana-Champaign}
%
\maketitle              

\setlist{nosep}

\begin{abstract}
Threshold cryptography is essential for many blockchain protocols. For example, many protocols rely on threshold common coin to implement asynchronous consensus, leader elections, and randomized applications. Similarly, threshold decryption and threshold time-lock puzzles are often necessary for privacy.

In this paper, we study the interplay between threshold cryptography and a class of blockchains that use Byzantine-fault tolerant (BFT) consensus protocols with a focus on latency. More specifically, we focus on \emph{blockchain-native threshold cryptosystem}, where the blockchain validators seek to run a threshold cryptographic protocol once for every block with the block contents as an input to the threshold cryptographic protocol.
All existing approaches for blockchain-native threshold cryptosystems introduce a latency overhead of at least one message delay for running the threshold cryptographic protocol. 
In this paper, we first propose a mechanism to eliminate this overhead for blockchain-native threshold cryptosystems with \emph{tight} thresholds, i.e., in threshold cryptographic protocols where the secrecy and reconstruction thresholds are the same. 
However, real-world proof-of-stake-based blockchain-native threshold cryptosystems rely on ramp thresholds, where reconstruction thresholds are strictly greater than secrecy thresholds. For these blockchains, we demonstrate that the additional delay is unavoidable. We then introduce a mechanism to minimize this delay in the optimistic case.
We implement our optimistic protocol for the proof-of-stake distributed randomness scheme on the Aptos blockchain. Our measurements from the Aptos mainnet show that the optimistic approach reduces latency overhead by 71\%, from 85.5 ms to 24.7 ms, compared to the existing method.  
\end{abstract}

\section{Introduction}
\label{sec:intro}
Threshold cryptography plays a vital role in modern blockchains, where various applications rely on primitives such as distributed randomness and threshold decryption. In threshold cryptography, a secret is shared among a set of parties using a threshold secret sharing~\cite{shamir1979share,blakley1979safeguarding}, and parties seek to collaboratively evaluate a function of the shared secret and some public input without revealing the shared secret. For security, the function of the shared secret and the public information is revealed only if a threshold fraction of parties contribute to the function evaluation. 

In this paper, we study the interplay between threshold cryptography and a class of blockchains that use Byzantine-fault tolerant (BFT) consensus protocols with a focus on latency. More specifically, we focus on \emph{blockchain-native threshold cryptosystem}, where the blockchain validators seek to run a threshold cryptographic protocol \tc\ once for every block with the block's content as an input to \tc\ protocol. We focus on schemes where the secret is shared using the Shamir secret sharing scheme~\cite{shamir1979share}, and the threshold cryptographic protocol is non-interactive, i.e., parties send a single message during the threshold cryptography protocol.

One concrete example of a \longname\ is the recent distributed randomness protocol for proof-of-stake blockchains~\cite{das2024distributed}, that has been deployed in the Aptos blockchain~\cite{aptos}.
In~\cite{das2024distributed}, parties collaboratively compute a threshold verifiable random function~(VRF) to generate shared randomness for each block, using the cryptographic hash of the block as an input to the threshold VRF. 
Similarly, Kavousi et al.~\cite{kavousi2023blindperm} propose to use threshold decryption to mitigate Maximal Extractable Value (MEV) attacks by the block proposers. Specifically, in \cite{kavousi2023blindperm} blockchain validators first order a set of encrypted transactions using a consensus protocol. Next, upon ordering, block validators run a threshold decryption protocol to decrypt the finalized transactions and execute them.

One limitation of existing \longname\ is that parties participate in the \tc\ protocol only after the block is finalized. Hence, all existing protocol introduces at least one additional message delay before the output of the \tc\ protocol is available, for the parties to exchange their \tc\ shares. As a result, blockchains that seek to use the output of the \tc\ protocol to execute the finalized transactions also incur this additional latency. For blockchains~\cite{castro1999practical,androulaki2018hyperledger} with optimal consensus latency of three-message delay~\cite{martin2006fast,abraham2021good,kuznetsov2021revisiting}, the additional round of communication adds at least 33\% latency overhead, which is significant. 

This paper studies whether the additional delay is inherent to support threshold cryptography in BFT-based blockchains. More specifically, let \tc\ be a threshold cryptography scheme. Then, the \emph{secrecy threshold} of \tc\ is the upper bound on the number of \tc\ messages an adversary can learn without learning the output of the \tc\ protocol. Alternatively, the \emph{reconstruction threshold} is the number of \tc\ messages an honest party requires to be able to compute the \tc\ output.
Committee-based blockchains where the parties have equal weights (stakes), such as Dfinity~\cite{hanke2018dfinity}, can use \longname\ with the same secrecy threshold as the reconstruction threshold.
For a wide variety of these blockchains, we present a protocol in which the parties can compute the \tc\ output simultaneously with the block finalization time. More specifically, our protocol applies to all BFT consensus protocols in which a value is finalized if and only if a threshold number of parties \emph{prefinalize} the value. 

However, many proof-of-stake blockchains~\cite{eth,solana,aptos,sui} where parties have unequal stakes, will rely on threshold cryptography with \emph{ramp} thresholds~\cite{blakley1985security}, i.e., use threshold cryptographic protocols where the reconstruction threshold is strictly larger than the secrecy thresholds. The ramp nature of threshold cryptography in these protocols is because these protocols assign to each party an approximate number of shares proportional to their stake~\cite{tonkikh2024swiper}. This approximate assignment of a number of shares to each party introduces a gap between the secrecy and reconstruction threshold, as the assignment process may allocate more shares to the corrupt parties and fewer shares to honest ones. Somewhat surprisingly, we prove a lower bound result illustrating that for \longname with ramp thresholds, the extra latency incurred by existing protocols is inherent for a wide family of consensus protocols. 


To circumvent this impossibility result, we propose a mechanism to design \longname\ protocols with ramp thresholds that achieve small latency overhead under optimistic executions. To demonstrate the effectiveness, we implement our solution atop the distributed randomness protocol (based on threshold VRF) used in the \aptos blockchain and evaluate its performance with their prior protocol. Our evaluation with real-world deployment illustrates that our optimistic approach reduces latency overhead by 71\%.

In summary, we make the following contributions:
\begin{itemize}
    \item We propose a mechanism (\Cref{alg:tight}) to remove the latency overhead for \longname with {\em tight} secrecy and reconstruction thresholds. The result applies to committee-based blockchain systems where parties have {\em equal} weights.

    \item We prove an impossibility result (\Cref{thm:impos}) indicating that the latency overhead is inherent for \longname with {\em ramp} thresholds, and present a solution (\Cref{alg:ramp}) that can remove the latency overhead under optimistic scenarios. 
    The results apply to proof-of-stake blockchain systems where parties have {\em unequal} weights.
    
    \item We implement our solution of ramp thresholds for distributed randomness and present evaluation numbers from the \aptos mainnet deployment. The evaluation demonstrates that the solution significantly improves the randomness generation latency overhead by $71\%$, from $85.5$ to $24.7$ ms.
\end{itemize}

\ifsubmission
Due to space constraints, we defer the related work to the full version~\cite{xiang2024latency}. 
\fi

\section{Preliminaries}
\label{sec:prelim}
\para{Notations.}
For any integer $a$, we use $[a]$ to denote the ordered set $\{1,2,\ldots,a\}$. 
For any set $S$, we use $|S|$ to denote the size of set $S$. 
We use $\lambda$ to denote the security parameter. A machine is probabilistic polynomial time~(PPT) if it is a probabilistic algorithm that runs in time polynomial in $\lambda$. We use $\negl(\lambda)$ to denote functions that are negligible in $\lambda$. 
We summarize the notations in~\Cref{tab:notations}.

\begin{table}[t!]
    \centering
    \small
    \setlength\tabcolsep{8pt}
    \renewcommand{\arraystretch}{1.3}
    \begin{tabular}{cl}
        \toprule
        \textbf{Symbol} & \textbf{Description} \\
        \midrule
        \mbb & multi-shot Byzantine broadcast (\Cref{def:mbb}) \\
        \mbbft & \mbb with finalization threshold (\Cref{def:ft}) \\
        \tc & threshold cryptosystem (\Cref{def:crypto}) \\
        \name & \longname (\Cref{def:bntc}) \\
        \nameft & \name with finalization threshold (\Cref{def:btc}) \\
        $\ths$ & secrecy threshold in \tc (\Cref{def:crypto}) \\
        $\thr$ & reconstruction threshold in \tc (\Cref{def:crypto}) \\
        $\thf$ & finalization threshold in \mbbft (\Cref{def:ft}) \\ 
        $\gftr$ & global finalization time of \round \rd in \mbb (\Cref{def:gft}) \\
        $\grt$ & global reconstruction time in \tc (\Cref{def:grt}) \\
        $\grtr$ & global reconstruction time of \round \rd in \name \\
        $\latency_{\rd}$ & latency of \round \rd in \name (\Cref{def:latency})\\
        \bottomrule
    \end{tabular}
    \vspace{1mm}
    \caption{Table of Notations. 
    }
    \vspace{-7mm}
    \label{tab:notations}
\end{table}

\para{System Model.}
\label{sec:prelim:system}
We consider a set of $n$ parties labeled $1,2,...,n$, where each party executes as a state machine.
For brevity, we present the results for parties with {\em equal} weights, which can be easily extended to the case with {\em unequal} weights.
The parties communicate with each other by message passing, via pairwise connected communication channels that are authenticated and reliable.
We consider a \emph{static} adversary $\adv$ that can corrupt up to $t$ parties before the execution of the system. 
A corrupted party can behave arbitrarily, and a non-corrupted party behaves according to its state machine. 
We say that a non-corrupted party is \emph{honest}.
We use $\cC$ to denote the set of corrupted parties, and $\cH$ to denote the set of honest parties.
The network is assumed to be {\em partially synchronous}, where there exists a known message delay upper bound $\Delta$, and a global stabilization time (GST) after which all messages between honest parties are delivered within $\Delta$~\cite{dwork1988consensus}. 
The adversary can receive messages from any party instantaneously.

\subsection{Blockchain Definitions}
\label{sec:prelim:consensus}

We define \MBB under partial synchrony as follows to capture the consensus layer of many real-world blockchains that assume a partial synchronous network. 
We will use \MBB and consensus interchangeably throughout the paper. 
Intuitively, \MBB consists of infinite instances of Byzantine Broadcast with rotating broadcasters and guarantees a total ordering among all instances. 
The primary reason for introducing a new definition, rather than relying on existing ones such as Byzantine Atomic Broadcast, is the necessity of incorporating {\em rounds} into the definition, as rounds will be referenced in later definitions (\Cref{def:ft}, \Cref{def:bntc}).
The definition of \MBB captures many existing partially synchronous leader-based BFT protocols, or with minor modifications~\footnote{Many chained BFT protocols such as HotStuff~\cite{yin2019hotstuff} and Jolteon~\cite{gelashvili2022jolteon} achieve a weaker Validity property. In these protocols, the finalization of the message proposed by the broadcaster of \round \rd requires multiple consecutive honest broadcasters starting from \round \rd. This weaker Validity does not affect the results we present in this paper.}, such as~\cite{castro1999practical,buchman2016tendermint,gueta2019sbft,yin2019hotstuff,chan2020streamlet,jalalzai2020fast,gelashvili2022jolteon,doidge2024moonshot}, as well as DAG-based BFT protocols, such as~\cite{spiegelman2022bullshark,spiegelman2023shoal,arun2024shoal++,keidar2022cordial,babel2023mysticeti}. 


\begin{definition}[\MBB]\label{def:mbb}
    \MBB is defined for a message space $\cM$ where $\bot\notin\cM$, and \rounds $\rd=0,1,2,...$ where each \round $\rd\in\N$ has one designated broadcaster $B_{\rd}$ who can call $\bcast(\rd,\m)$ to broadcast a message $\m\in\cM$. 
    For any round $\rd\in\N$, each party can output $\fin(\rd,\m)$ once to finalize a message $\m\in\cM\cup\{\bot\}$.
    The \MBB problem satisfies the following properties. 
    \begin{itemize}
        \item \emph{Agreement}. For any \round $\rd\in\N$, if an honest party $i$ outputs $\fin(\rd,\m)$ and an honest party $j$ outputs $\fin(\rd,\m')$, then $\m=\m'$. 
        \item \emph{Termination}. After GST, for any \round $\rd\in\N$ each honest party $i$ eventually outputs $\fin(\rd,\m_i)$ where $\m_i\in\cM\cup\{\bot\}$.
        \item Validity. If the broadcaster $B_{\rd}$ of \round $\rd$ is honest and calls $\bcast(\rd,\m)$ for any $m\in \cM$ after GST, then all honest parties eventually output $\fin(\rd,\m)$.
        \item \emph{Total Order}. If an honest party outputs $\fin(\rd,\m)$ before $\fin(\rd',\m')$, then $\rd<\rd'$.
    \end{itemize}
\end{definition}

A \MBB protocol \mbb defines the state machine for each party to solve \MBB. 
Now we define an {\em execution} of an \mbb protocol, and the {\em globally finalization} of a message for a round. 

\begin{definition}[Execution, multivalent and univalent state]
    A configuration of the system consists of the state of each party, together with all the messages in transit. 
    Each execution of a \MBB protocol is uniquely identified by the sequence of configurations. 
    
    During the execution of a \MBB protocol, 
    the system is in a multivalent state for \round \rd, if there exist two possible executions $\cE\ne\cE'$ both extending the current configuration, where some honest party output differently in $\cE,\cE'$; 
    the system is in a univalent state of $\m\in\cM\cup\{\bot\}$ for \round \rd, if for all executions extending the current configuration, all honest parties always outputs $\fin(\rd,\m)$.
\end{definition}

\begin{definition}[Global finalization]\label{def:gft}
    During the execution of a \MBB protocol, a message $\m\in\cM\cup\{\bot\}$ is globally finalized for \round \rd, if and only if the system is in the univalent state of $\m$ for \rd. 
    The global finalization time $\gftr$ of \round \rd is defined as the earliest physical time when a message is globally finalized for \rd.

    We say that a party locally finalizes $\m$ for \round $\rd$ when it outputs $\fin(\rd,\m)$.
\end{definition}

Intuitively, a message is globally finalized for a \round \rd in \MBB when it is the only message that can be the output of \rd.
Global finalization is a global event that may not be immediately known to any honest party, but must occur no later than the moment that any honest party outputs $\fin(\rd,\cdot)$. 
Compared to local finalization, which occurs when any honest party outputs $\fin(\rd,\cdot)$, global finalization is more fundamental.

\para{\MBB with Finalization Threshold.}
The paper focuses on a family of \mbb protocols that have a finalization threshold $\thf$ defined as follows, where $\thf$ is a parameter of the definition.
We call such a protocol \MBBFT, or \mbbft.

\begin{definition}[Finalization Threshold $\thf$]\label{def:ft}
    For any \round $\rd\in\N$, the \MBBFT, or \mbbft, has a step where a party prefinalizes $\m$ for round $\rd$ by sending $(\PREFIN,\rd,\m)$ to all parties, such that
    \begin{itemize}
        \item\label{def:ft:1} For any \round \rd, there exists at most one $\m\in\cM$ such that any honest party may send $(\PREFIN,\rd,\m)$.
        \item\label{def:ft:2} For any \round \rd, any honest party can send $(\PREFIN,\rd,\bot)$ after sending $(\PREFIN,\rd,\m)$ for some $\m\in\cM$, but not the reverse.
        \item\label{def:ft:3} $\m\in\cM\cup\{\bot\}$ is globally finalized for \rd, if and only if there exist $\thf$ parties (or equivalently $\thf-|\cC|$ honest parties) that have sent $(\PREFIN,\rd,\m)$.
    \end{itemize}

    Any honest party outputs $\fin(\rd,\m)$ to locally finalize a message $\m\in\cM\cup\{\bot\}$ for a \round $\rd$ when the party receives $(\PREFIN,\rd,\m)$ messages from $\thf$ parties, which implies that $\m$ is globally finalized for \rd.

    We say $\thf$ is the finalization threshold of \mbbft. 
\end{definition}

Intuitively, prefinalization is a local state of any party. When enough honest parties prefinalize a message, the message is globally finalized. A single party prefinalizing a message does not guarantee that the message will be finalized, and the party may finalize another message at the end. 
In many BFT protocols such as HotStuff~\cite{yin2019hotstuff} and Jolteon~\cite{gelashvili2022jolteon}, prefinalization is also named \emph{lock}. In PBFT~\cite{castro1999practical}, party prefinalizes a message when sending a {\em commit} for the message.

\paragraph{Examples of \MBBFT.}
A larger number of \mbb protocols used in partially synchronous blockchains fall into this family. 
As a concrete example, \jolteon~\cite{gelashvili2022jolteon} is a partially synchronous \mbbft protocol deployed by blockchains such as \aptos~\cite{aptos} and Flow~\cite{flow}.
We explain in detail how \jolteon satisfies the definition of \MBBFT with finalization threshold
\ifsubmission
in the full version~\cite{xiang2024latency}.
\else
in~\Cref{app:rand:impl}.
\fi
Other than \jolteon,
numerous partially synchronous BFT protocols are also part of this family or can be easily adapted to fit into this family, such as PBFT~\cite{castro1999practical}, Tendermint~\cite{buchman2016tendermint}, SBFT~\cite{gueta2019sbft}, HotStuff~\cite{yin2019hotstuff}, Fast-Hotstuff~\cite{jalalzai2020fast}, Moonshot~\cite{doidge2024moonshot} and many others.
Additionally, another series of DAG-based consensus protocols also satisfy the definition of \MBBFT with finalization threshold, such as Shoal++~\cite{arun2024shoal++}, Sailfish~\cite{shrestha2024sailfish}, Cordial miners~\cite{keidar2022cordial} and Mysticeti~\cite{babel2023mysticeti}.  

\subsection{Cryptography Definitions}
\label{sec:prelim:crypto}
Next, we describe the syntax and security definitions for threshold cryptosystems. We focus on non-interactive threshold cryptographic protocols.
\begin{definition}[Threshold Cryptosystem]
\label{def:crypto}
Let $\ths,\thr,n$ with $\ths\le \thr\le n$ be natural numbers. We refer to $\ths$ and $\thr$ as the secrecy and reconstruction threshold. Let $\cX$ be a input space. Looking ahead, the input space $\cX$ denotes the output space of the underlying \MBB protocol.

A $(n,\ths,\thr)$-threshold cryptosystem \tc\ is a tuple of PPT algorithms $\tc=(\setup,\sgen,\eval,\peval,\allowbreak \pver,\comb,\ver)$ defined as follows:
\begin{enumerate}
    \item $\setup(1^\lambda)\rightarrow \pp.$ The setup algorithm takes as input a security parameter and outputs public parameters $\pp$ (which are given implicitly as input to all other algorithms).
    
    \item $\sgen(s)\rightarrow \{\pk,\pk_i,\sch{s}_i\}_{i\in [n]}.$ The share generation algorithm takes as input a secret $s\in \cK$ from a secret key space $\cK$ and outputs a public key $\pk$, a vector  of threshold public keys $\{\pk_1,\ldots,\pk_n\}$, and a vector of secret shares $(\sch{s}_1,\ldots,\sch{s}_n)$. The $j$-th party receives $(\{\pk_i\}_{i\in [n]},\sch{s}_j)$.

    \item $\eval(s,\val)\rightarrow \sigma.$ The evaluation algorithm takes as input a secret share $s$, and a value $\val\in \cX$. It outputs a function output $\sigma$, which is called the \tc output in the paper.
    
    \one
    \item $\peval(\sch{s}_i,\val)\rightarrow \sigma_i.$ The partial evaluation takes as input a secret share $\sch{s}_i$, and a value $\val\in \cX$. It outputs a function output share $\sigma_i$, which is called the \tc output share in the paper.
    \one
    \item $\pver(\pk_i,\val,\sigma_i)\rightarrow 0/1.$ The partial verification algorithm takes as input a public key $\pk_i$, a value $\val$, and a \tc output share $\sigma_i$. It outputs 1 (accept) or 0 (reject).
    \one
    \item $\comb(S,\val,\{(\pk_i,\sigma_i)\}_{i\in S})\rightarrow \sigma/\bot.$ The combine algorithm takes as input a set $S\subseteq [n]$ with $|S|\ge \thr$, a value $\val$, and a set of tuples $(\pk_i,\sigma_i)$ of public keys and \tc output shares of parties in $S$. It outputs a \tc output $\sigma$ or $\bot$.

    \one 
    \item $\ver(\pk, \val,\sigma)\rightarrow 0/1:$ The verification algorithm takes as input a public key $\pk$, input $\val$, and evaluation output $\sigma$. It outputs 1 (accept) or 0 (reject).
\end{enumerate}
\end{definition}


We require a threshold cryptosystem to satisfy the standard \emph{Robustness} and \emph{Secrecy} properties, as defined 
\ifsubmission
in the full version~\cite{xiang2024latency}
\else
in~\Cref{app:prelim:crypto}
\fi
due to space constraints.

\begin{definition}[Ramp~\cite{blakley1985security} and tight thresholds]\label{def:crypto:thresholds}
For any $(n,\ths,\thr)$-threshold cryptosystem, we call it a \emph{tight} threshold cryptosystem if $\ths=\thr$, and a \emph{ramp} threshold cryptosystem if $\ths < \thr$.    
\end{definition}


\begin{definition}[Global reconstruction]
\label{def:grt}
For any \longtcs with secret $s$ and input $\val$, we say $\eval(s,\val)$ is \emph{globally reconstructed} if and only if the adversary learns $\eval(s,\val)$ (or equivalently $\thr-|\cC|$ honest parties reveal \tc output shares to \adv). The global reconstruction time $\grt$ is defined to be the earliest physical time when $\eval(s,\val)$ is globally reconstructed (i.e., when $\thr-|\cC|$ honest parties reveal \tc output shares to \adv).   

We say that a party locally reconstructs $\eval(s,\val)$ when it learns $\eval(s,\val)$ (or equivalently receiving $\thr$ valid shares).
\end{definition}

\para{Double sharing of the secret.}
Looking ahead, we require our threshold cryptosystem to support double sharing of the same secret for two sets of thresholds $(\ths,\thr)$ and $(\ths',\thr')$ where $(\ths,\thr)\ne (\ths',\thr')$. Threshold cryptosystems based on Shamir secret sharing~\cite{shamir1979share} easily support double sharing. 


\section{Blockchain-Native Threshold Cryptosystem}
\label{sec:btc}

We now formally define the problem of \longname. 
In such a system, a secret is shared among the participants in the blockchain protocol and these parties seek to collaboratively run a threshold cryptographic protocol, after every block, using the shared secret and the block as input.

\begin{definition}[Blockchain-Native Threshold Cryptosystem]\label{def:bntc}
    Let \mbb be a \MBB protocol as in~\Cref{def:mbb}.
    Let $\tc=(\setup,\sgen,\eval,\peval,\allowbreak \pver,\comb,\ver)$ be a \longtcs as in~\Cref{def:crypto}. 
    A \longname protocol $\name=(\mbb,\tc)$ is defined as follows. 
    \begin{enumerate}
        \item The parties start with a secret share of a secret key $s$ as per  $\sgen(s)$.
        \item The parties run \mbb, and may simultaneously execute the \tc\ protocol.
        \item Upon \mbb outputs $\fin(\rd,\m)$ for any \round $\rd\in\N$, after some possible delay, parties finish the \tc\ protocol to compute $\sigma=\eval(s,(\rd,\m))$ and outputs $(\rd, \m, \sigma)$.
    \end{enumerate}

    \noindent We require $\name$ to satisfy the following except for negligible probabilities. 
    \begin{itemize}
        \item \emph{Agreement}. For any \round $\rd\in\N$, if an honest party outputs $(\rd,\m,\sigma)$ and another honest party outputs $(\rd,\m',\sigma')$, then $\m=\m'$ and $\sigma=\sigma'$.
        \item \emph{Termination}. After GST, for any \round $\rd\in\N$ each honest party $i$ eventually outputs $(\rd,\m_i,\sigma)$ where $\m_i\in\cM\cup\{\bot\}$.
        \item \emph{Validity}. For any \round $\rd\in\N$, if the designated broadcaster $B_{\rd}$ is honest and calls $\bcast(\rd,\m)$ for $\m\in\cM$ after GST, then all honest parties eventually output $(\rd,\m,\eval(s,(\rd,\m)))$.
        \item \emph{Total Order}. If an honest party outputs $(\rd,\m,\sigma)$ before $(\rd',\m',\sigma')$, then $\rd<\rd'$.
        \item \emph{Secrecy}. If an honest party outputs $(\rd,\m,\sigma)$ where $\m\in\cM\cup\{\bot\}$, then $\sigma=\eval(s,(\rd,\m))$, and the adversary cannot compute $\eval(s,(\rd,\m'))$ where $\m\in\cM$ and $\m'\ne \m$.
    \end{itemize}
\end{definition}

For $\name=(\mbb,\tc)$, we use $\grtr$ to denote the global reconstruction time of \round \rd in \tc, as defined in~\Cref{def:grt}.

\paragraph{Example of \longname.}
On-chain distributed randomness generates a shared randomness for every finalized block.
\tc\ for this application can be a threshold VRF scheme. Upon \mbb (the blockchain consensus layer) outputs $\m$ (a block) for a round $r$, parties run the \tc\ protocol to compute the shared randomness $\eval(s,(r,m))$.

Below we define the latency of a \longname to measure the introduced latency overhead. 
Intuitively, $\latency_\rd$ is the maximum time difference, across all honest parties, between the time a honest party $i$ finalizes in \mbb\  and the time the same honest party $i$ outputs.
Since the transaction execution relies on the \tc output of the threshold cryptosystem, by definition, a party may have to wait a period of $\latency_{\rd}$ before executing the transactions finalized for \round \rd, thus increasing the blockchain's transaction end-to-end latency. 

\begin{definition}[Latency of Blockchain-Native Threshold Cryptosystem]\label{def:latency}
    During an execution of a \longname $\name=(\mbb,\tc)$, for any \round \rd and party $i$, let $T_{i,\rd}^{\sf F}$ be the physical time when party $i$ outputs $\fin(\rd, \m)$ for some $\m$ in $\mbb$, and $T_{i,\rd}^{\sf O}$ be the physical time when party $i$ outputs $(\rd,\m,\sigma)$ for some $\m,\sigma$ in $\name$. 
    The latency for \round \rd of the execution is defined to be $\latency_{\rd}=\max_{i\in\cH}(T_{i,\rd}^{\sf O}-T_{i,\rd}^{\sf F})$.
\end{definition}

\subsection{Blockchain-Native Threshold Cryptosystem with Finalization Threshold}
\label{sec:btc}

This paper focuses on a family of \longname protocols defined as follows.

\begin{definition}[Blockchain-Native Threshold Cryptosystem with Finalization Threshold]\label{def:btc}
    A \longnameft, or $\nameft=(\mbbft, \tc)$, is a \longname (\Cref{def:bntc}) that uses an \mbbft protocol (\Cref{def:ft}).
\end{definition}

We will henceforth shorten \emph{\longname} as \name, and \emph{\longnameft} as \nameft.

Some of the paper's results hold under optimistic conditions defined below. 

\para{Error-free.} An execution of \name is error-free if all parties are honest.

\para{Synchronized execution.} An execution of $\nameft=(\mbbft, \tc)$ is synchronized for a \round \rd if all honest parties prefinalize the same message $\m\in\cM$ for \rd at the same physical time~\footnote{This condition is defined solely for proving theoretical latency claims. 
In practice, different honest parties may prefinalize at different physical times, and the result of the paper still achieves latency improvements as in~\Cref{sec:rand:eval}.} in \mbbft.

\para{Optimistic.} An execution of \nameft is optimistic if the execution is error-free and synchronized for any \round \rd, and all messages have the same delay. 




\subsection{A Strawman Protocol}
\label{sec:ramp:strawman}
\begin{algorithm}[t!]
  \caption{Slow Path for Blockchain-Native Threshold Cryptosystem}
  \label{alg:strawman}
  \small
  \begin{minipage}[t]{\columnwidth}
    \medskip
    SETUP:
    \begin{algorithmic}[1]
        \State let $\bm{\m}\gets\{\}$, $\bm{\sigma}\gets \{\}$
        \Comment{Maps that store outputs for rounds}
        \State let $\queue\gets\{\}$
        \Comment{A FIFO queue that stores the finalized rounds}
        \State let $(\{\pk_j\}_{j\in [n]},\sch{s}_i)\gets\sgen(s)$ for thresholds $t+1\leq \ths\leq \thr\leq n-t$
        \Comment{$s$ is unknown to any party}
    \end{algorithmic}
    
    \medskip\hrule\medskip
    SLOW PATH: 
    \begin{algorithmic}[1]
        \Upon{$\fin(\rd,\m)$}
            \State let $\bm{\m}[\rd]\gets\m$ and $\sigma_i\gets \peval(\sch{s}_i,(\rd,\m)) $
            \State $\queue.push(\rd)$
            \State {\bf send} $(\share,\rd,\m,\sigma_i)$ to all parties
        \EndUpon
    \end{algorithmic}

    \medskip\hrule\medskip
    RECONSTRUCTION:
    \begin{algorithmic}[1]
        \Upon{receiving $(\share,\rd,\m,\sigma_j)$ from party $j$}
            \If{$\pver(\pk_j,(\rd,\m),\sigma_j) = 1$}
                \State $S_{\rd,\m}\gets S_{\rd,\m}\cup \{j\}$
                \If{$|S_{\rd,\m}|\ge \thr$ and $\bm\sigma[\rd]=\{\}$}
                    \State let $\bm\sigma[\rd]\gets \comb(S_{\rd,\m},(\rd,\m),\{(\pk_i,\sigma_i)\}_{i\in S_{\rd,\m}})$
                \EndIf
            \EndIf
        \EndUpon
    \end{algorithmic}
    
    \medskip\hrule\medskip
    OUTPUT:
    \begin{algorithmic}[1]
        \Upon{$\bm\sigma[\queue.top()]\ne\{\}$}
        \Comment{Always running in the background}
            \State let $\rd\gets\queue.pop()$
            \State {\bf output} $(\rd,\bm{\m}[\rd],\bm\sigma[\rd])$
        \EndUpon
    \end{algorithmic}
  \end{minipage}
\end{algorithm}

%
As a warm-up, we first describe a strawman protocol for any \longname\ $\name=(\mbb,\tc)$ (\Cref{def:bntc}) in~\Cref{alg:strawman}, which works for both tight and ramp thresholds. 
The protocol has a latency $\latency_{\rd}\geq \delta$ for any \round $\rd\in \N$ even in error-free executions with constant message delay $\delta$ between honest parties. 
To the best of our knowledge, all existing \longname follow this approach, such as the Dfinity~\cite{hanke2018dfinity} and Sui~\cite{sui} blockchains for distributed randomness.
%
For brevity, we refer to this protocol as the \emph{\slow}. 

As part of the setup phase, each party $i$ receives $(\{\pk\}_{i\in [n]}, \sch{s}_i)$, where $\sch{s}_i$ is the secret share of party $i$ and $\{\pk\}_{i\in [n]}$ is the vector of threshold public keys of all parties. 
Each party maintains a First-in-first-out (FIFO) \queue\ to record the finalized rounds awaiting the \tc output. These rounds are pushed into the FIFO \queue\ in the order they are finalized, and only the head of the FIFO \queue\ will pop and be output. Looking ahead, this ensures Total Ordering, even when parties reconstruct \tc\ outputs of different rounds in out of order. 
Each party additionally maintains two maps $\bm{\m}$ and $\bm{\sigma}$ to store the finalized message and \tc output shares of each parties  of each round, respectively.

In the protocol, for any given \round\ \rd, each party $i$ waits until a message $\m$ is finalized by the \mbb\ protocol in \round\ \rd. Upon finalization, each party $i$ computes its \tc\ output share $\sigma_i\gets\peval(\sch{s}_i,(\rd,\m))$ and sends the \share\ message $(\share,\rd,\m,\sigma_i)$ to all parties.
Party $i$ also adds \round \rd to $\queue$ and updates $\bm{\m}$ as $\bm{\m}[\rd]\gets m$.
Next, upon receiving $(\share,\rd,\m,\sigma_j)$ from party $j$, party $i$ first validates $\sigma_j$ using $\pver$ algorithm and adds $\sigma_j$ to the set $S_{\rd,\m}$ upon successful validation. Finally, upon receiving $\thr$ valid \share\ messages for $(\m,\rd)$, party $i$ computes the \tc\ output $\sigma$ using $\comb$ algorithm and updates $\bsigma$ as $\bsigma[\rd]=\sigma$.
%
%
Whenever party $i$ has the \tc output of \round \rd that is the head of $\queue$, party $i$ pops the queue and outputs the result $(\rd, \bm{\m}[\rd],\bm{\sigma}[\rd])$ for \round \rd. 

To ensure the Termination property, the reconstruction threshold must be no greater than $n-t$, i.e., $\thr\leq n-t$. Intuitively, this ensures that once the \mbb\ outputs in a round, every honest party receives a sufficient number of \tc\ output shares to reconstruct the \tc\ output. Additionally, for Secrecy for the strawman protocol, the secrecy threshold must be greater than the number of \tc\ shares controlled by the adversary, i.e.,  $\ths\geq t+1$. Intuitively, this prevents the adversary from reconstructing the \tc output on its own. 
%
The correctness of the protocol is straightforward and is omitted here for brevity.

We will now argue that the \slow has a latency overhead of at least $\delta$ even in error-free executions with constant message delay $\delta$ between honest parties.
For any \round \rd, any party needs to receive at least $\thr-|\cC|\geq 1$ shares from the honest parties to compute $\sigma$.
Consider the first honest party $i$ that outputs $\fin(\rd,\m)$. In the strawman protocol, party $i$ needs to wait for at least one message delay starting from finalization to receive the shares from the honest parties to compute $\sigma$, since other honest parties only send shares after finalization. 
Adding one additional message delay to the system represents a significant overhead as the \mbb latency can be as short as three message delays

\section{Tight Blockchain-Native Threshold Cryptosystem}
\label{sec:tight}
\begin{algorithm}[t!]
  \caption{Tight Blockchain-Native Threshold Cryptosystem}
  \label{alg:tight}
  \small
  \begin{minipage}[t]{\columnwidth}
    \medskip
    SETUP
    is same as~\Cref{alg:strawman} except that $\ths=\thr=\thf$.
    
    \medskip\hrule\medskip
    PREFINALIZATION:
    \begin{algorithmic}[1]
        \Upon{sending $(\PREFIN,\rd,\m)$ to all parties}
            \Comment{Augment the consensus protocol}
            \State let $\sigma_i\gets \peval(\sch{s}_i,(\rd,\m)) $
            \State {\bf send} $(\share,\rd,\m,\sigma_i)$ to all parties
        \EndUpon

        
        \Upon{$\fin(\rd,\m)$}
            \State let $\bm{\m}[\rd]\gets\m$
            \State $\queue.push(\rd)$
            \If{$(\share,\rd,*,*)$ not sent}
                \State let $\sigma_i\gets \peval(\sch{s}_i,(\rd,\m)) $
                \State {\bf send} $(\share,\rd,\m,\sigma_i)$ to each other party
            \EndIf
        \EndUpon
    \end{algorithmic}


    
    \medskip\hrule\medskip
    RECONSTRUCTION and OUTPUT are same as~\Cref{alg:strawman}.
  \end{minipage}
\end{algorithm}

In this section, we present a protocol for $\nameft=(\mbbft,\tc)$ (\Cref{def:btc}) for tight thresholds that has low latency.
In any \round $\rd\in\N$ of any execution, the global finalization time of our protocol is same as the global reconstruction time, i.e., $\gftr=\grtr$. Moreover, in error-free executions~\footnote{We require error-free for the $\latency_{\rd}=0$ claim, otherwise malicious parties may cause honest parties to prefinalize at different times and lead to $\latency_{\rd}>0$.}, honest parties in our protocol learns the \tc\ output simultaneously with the \mbbft\ output,
i.e., $\latency_{\rd}=0$. We summarize our construction in~\Cref{alg:tight} and describe it next.

The setup phase is identical to that of~\Cref{alg:strawman}, except that the secrecy and reconstruction thresholds are set to be equal to the finalization threshold of \mbbft. 
Note that, the Termination property of \mbbft\ requires $\thf\leq n-t$, as honest parties needs to finalize a message even when corrupted parties do not send any throughout the protocol. 
Next, unlike~\Cref{alg:strawman}, parties reveal their \tc output shares when they prefinalize a message. 
More specifically, for every \round\ \rd, each party $i$ computes $\sigma_i\gets\peval(\sch{s}_i,(\rd,\m))$ upon prefinalizing the value $(\rd,\m)$, and sends the \share\ message $(\share,\rd,\m,\sigma_i)$ to all parties in addition to sending the \PREFIN\ message $(\PREFIN,\rd,\m)$. 
When a party receives $\thf$ \PREFIN\ messages $(\PREFIN,\rd,\m)$, it finalizes the message $\m$ for \round \rd, by adding \round \rd to $\queue$ and recording $\m$ in $\bm{\m}[\rd]$. The party also computes and sends $\sigma_i$ if it has not done so. 
The reconstruction and output phases are also identical to~\Cref{alg:strawman}, where parties collect and combine shares to generate \tc output, and output the result round-by-round.
We defer the protocol analysis 
\ifsubmission
to the full version~\cite{xiang2024latency}
\else
to~\Cref{app:tight}
\fi
due to space constraints.

\section{Ramp Blockchain-Native Threshold Cryptography}
\label{sec:ramp}

In this section, we present an impossibility result and a feasibility result for $\nameft=(\mbbft,\tc)$ (\Cref{def:btc}) with ramp thresholds $\ths<\thr$.
%

\subsection{Impossibility}
\label{sec:ramp:impos}

First, we demonstrate the impossibility result, which says that no $\nameft$ protocol with ramp thresholds $\ths<\thr$ can always guarantee that global finalization and reconstruction occur simultaneously. 


\begin{theorem}\label{thm:impos}
    For any $\nameft=(\mbbft,\tc)$ with ramp thresholds, there always exists some execution where $\grtr>\gftr$ for each \round $\rd\in\N$.
\end{theorem}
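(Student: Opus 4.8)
The plan is to argue by contradiction: assume some $\nameft = (\mbbft, \tc)$ with $\ths < \thr$ has the property that in \emph{every} execution and every round $\rd$ we have $\grtr = \gftr$ (by \Cref{lem:impos} we always have $\grtr \geq \gftr$, so the negation of the theorem statement is exactly this equality). I would then exhibit a specific adversarial strategy and network schedule that forces $\eval(s,(\rd,\m))$ to be reconstructable strictly before the system reaches a univalent state, contradicting the equality. The key quantitative handle is the gap $\thr - \ths \geq 1$ together with the relation between $\thf$ and the thresholds, and the fact that the corrupt set satisfies $|\cC| \le t$ while $\ths \ge t+1$, so corrupt parties alone hold fewer than $\ths$ shares but can be \emph{combined} with honestly-revealed shares.

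The heart of the argument is a bivalence/indistinguishability construction. First I would fix a round $\rd$ and set up a configuration that is multivalent for $\rd$ — two continuations $\cE, \cE'$ in which honest parties finalize $\m$ versus $\m'$ (or $\m$ versus $\bot$); such a configuration exists because $\mbbft$ must tolerate $t$ crash/Byzantine faults and asynchrony before GST, so the adversary can keep the system bivalent for $\m$ while some honest parties have already prefinalized $\m$. Concretely: let a set $H_1$ of honest parties with $|H_1| = \thf - |\cC| - 1$ prefinalize $\m$ and broadcast their $\share$ messages (if the protocol reveals shares at prefinalization time, as it must to have any hope of $\latency_\rd$ near $0$); meanwhile delay all messages to and from the remaining honest parties. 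Now the adversary's $|\cC| \le t$ corrupt parties can \emph{also} contribute valid-looking partial evaluations $\peval(\sch{s}_i,(\rd,\m))$ for $i \in \cC$. The total number of shares on $(\rd,\m)$ available to the adversary is then $(\thf - |\cC| - 1) + |\cC| = \thf - 1$. The contradiction will come from choosing parameters so that $\thf - 1 \ge \thr$ but the $H_1$-prefinalizations do \emph{not} yet globally finalize $\m$ (which needs $\thf$ prefinalizations, and corrupt parties need not have prefinalized). I would need to be careful here: the ramp gap $\thr - \ths \ge 1$ is what creates room, because in the tight case $\thr = \thf = \ths$ forces $\thf - |\cC| - 1 + |\cC| = \thf - 1 < \thr$, which is exactly why \Cref{alg:tight} works; with $\ths < \thr$ one can instead argue that \emph{either} the protocol reveals shares early enough that the above count reaches $\thr$ before global finalization (reconstruction too early, violating the Secrecy direction of \Cref{lem:impos}'s spirit — but here I want $\grtr < \gftr$, so this is the good case), \emph{or} the protocol reveals shares only at/after local finalization, in which case I fall back to the strawman-style argument that $\grtr \geq \gftr + \delta > \gftr$.

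So the real structure is a dichotomy on the protocol's behavior: (1) if in the ``early-reveal'' schedule above the adversary collects $\thr$ valid shares on $(\rd,\m)$ while the system is still multivalent for $\rd$, then $\eval(s,(\rd,\m))$ is globally reconstructed before global finalization, giving $\grtr < \gftr$ in that execution — but wait, that contradicts \Cref{lem:impos}; so actually the protocol \emph{cannot} reveal that many shares early, which means in the execution where $\m$ \emph{does} get finalized, the honest parties outside $H_1$ must reveal their shares only after more prefinalizations accumulate, i.e., effectively after (near) global finalization, forcing an extra delay and hence $\grtr > \gftr$ in \emph{that} execution. I would formalize this by playing the two executions against each other: in $\cE$ (where $\m$ is finalized) the reconstruction-relevant honest share-reveals happen at physical time $\ge \gftr$ because revealing them earlier would, by indistinguishability, also let the adversary reconstruct in $\cE'$ (where $\m$ is \emph{not} finalized), violating \Cref{lem:impos} applied to $\cE'$. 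Combining with the $\thr - \ths \ge 1$ gap to guarantee the relevant multivalent window is non-degenerate, we get a round $\rd$ and an execution with $\grtr > \gftr$.

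The main obstacle I expect is making the indistinguishability argument fully rigorous across the two executions while tracking exactly \emph{which} honest parties' share-reveals are ``charged'' to reconstruction versus finalization — in particular ruling out the possibility that corrupt parties' contributed shares plus a clever protocol-specific early-reveal rule thread the needle. Handling this cleanly will likely require the paper's formal notion of execution/configuration and a careful accounting that the corrupt parties can always supply up to $t$ shares on \emph{any} value $(\rd, \m')$ for which the bivalence is maintained, so that the honest contribution needed to cross $\thr$ is at most $\thr - t \le \thr - (\ths - 1) = \thr - \ths + 1 \le$ (something smaller than $\thf - |\cC|$), which is the precise place the ramp hypothesis is used. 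I would also need to invoke the $\mbbft$ structure (\Cref{def:ft}) to know that fewer than $\thf$ prefinalizations keeps the system multivalent, and the standard FLP-style bivalence-preservation lemma (adapted to partial synchrony before GST) to sustain the adversarial schedule for long enough.
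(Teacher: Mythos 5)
Your overall strategy (contradiction plus a perturbation/indistinguishability argument exploiting $\thr>\ths$) is in the right spirit, but the construction as written does not close. Three concrete problems. First, your share count $(\thf-|\cC|-1)+|\cC|=\thf-1\ge\thr$ is a parameter assumption you are not entitled to: the theorem quantifies over \emph{all} ramp protocols, and nothing forces $\thf-1\ge\thr$ (typically $\thr\le n-t=\thf$, so the inequality can fail). Moreover, by \Cref{def:grt} the relevant event is that $\thr-|\cC|$ \emph{honest} parties reveal shares (equivalently, secrecy is lost once at least $\ths-|\cC|$ honest shares are out), so counting corrupt parties' shares toward $\thr$ does not line up with the definitions you ultimately need to contradict. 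Second, your dichotomy is not exhaustive: a protocol may time its reveals so that exactly $\thr-|\cC|$ honest shares become available exactly at $\gftr$ (indeed \Cref{alg:ramp} achieves $\grtr=\gftr$ in synchronous executions), and neither your ``early-reveal'' branch nor your ``reveal only after finalization'' branch covers this boundary case --- which is precisely the case the theorem must rule out as holding in \emph{every} execution. Third, and relatedly, your final step only argues that the relevant reveals happen ``at or after (near) global finalization,'' which yields $\grtr\ge\gftr$; that is \Cref{lem:impos} again, not the strict inequality the theorem asserts.

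The paper closes these gaps with a local $\epsilon$-perturbation rather than an FLP-style bivalence schedule. Assume for contradiction that $\grtr=\gftr=\tau$ in every execution, and fix one such execution $\cE$. At time $\tau$, exactly $\thf-|\cC|$ honest parties have called $\prefin(\rd,\m)$ and exactly $\thr-|\cC|$ honest parties have revealed shares (WLOG). Pick an honest $h$ that prefinalizes at time $\tau$ and build $\cE'$ identical to $\cE$ except that $h$ acts at $\tau+\epsilon$; global finalization in $\cE'$ then moves to $\tau+\epsilon$. If $h$'s share was among those revealed at $\tau$, then at time $\tau$ in $\cE'$ there are still $\thr-|\cC|-1\ge\ths-|\cC|$ honest shares out --- this is the one place the ramp hypothesis is used --- so the secrecy of $\eval(s,(\rd,\m))$ is already lost before finalization. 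If $h$'s share was not among them, the reconstruction time is unchanged at $\tau$ while finalization is at $\tau+\epsilon$. Either way $\cE'$ contradicts \Cref{lem:impos}. You would need to replace your global scheduling construction with an argument of this form (or otherwise complete the boundary-case analysis) for the proof to go through.
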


Due to space constraints, the proof of~\Cref{thm:impos} is deferred 
\ifsubmission
to the full version~\cite{xiang2024latency}.
\else
to~\Cref{app:ramp:impos}.
\fi
\Cref{thm:impos} states that for any \longnameft and ramp thresholds, there always exists an execution where global reconstruction occurs after global finalization for each \round. 
In fact, existing solutions for \name with ramp thresholds all have a latency of at least one message delay, such as \posrand~\cite{das2024distributed}.

\subsection{Fast Path}
\label{sec:ramp:opt}

\begin{algorithm}[h!]
  \caption{Fast Path for Ramp Blockchain-Native Threshold Cryptosystem}
  \label{alg:ramp}
  \small
  \begin{minipage}[t]{\columnwidth}
    \medskip
    SETUP:
    \begin{algorithmic}[1]
        \State let $\bm{\m}\gets\{\}$, $\bm{\sigma}\gets \{\}$, $\queue\gets\{\}$
        \State let $(\{\pk_j\}_{j\in [n]},\sch{s}_i)\gets\sgen(s)$ for $t+1\leq \ths<\thr \leq n-t$
        \Comment{For slow path}
        \State let $(\{\pk'_j\}_{j\in [n]},\sch{s}'_i)\gets\sgen(s)$ for $\ths'=\thf, \thr'=\min(\ths'+(\thr-\ths), n)$
        \Comment{For fast path, $s$ is unknown to any party}
    \end{algorithmic}
    
    \medskip\hrule\medskip
    FAST PATH: 
    \begin{algorithmic}[1]
        \Upon{sending $(\PREFIN,\rd,\m)$ to all parties}
            \Comment{Augment the consensus protocol}
            \State let $\sigma'_i\gets \peval(\sch{s}'_i,(\rd,\m)) $
            \State {\bf send} $(\texttt{FAST-SHARE}, \rd,\m,\sigma'_i)$ to each other party
        \EndUpon
    \end{algorithmic}

    \medskip\hrule\medskip
    SLOW PATH: 
    \begin{algorithmic}[1]
        
        \Upon{$\fin(\rd,\m)$}
            \State let $\bm{\m}[\rd]\gets\m$ and $\sigma_i\gets \peval(\sch{s}_i,(\rd,\m)) $
            \State $\queue.push(\rd)$
            \State {\bf send} $(\texttt{SLOW-SHARE},\rd,\m,\sigma_i)$ to each other party 
        \EndUpon
    \end{algorithmic}

    \medskip\hrule\medskip
    RECONSTRUCTION:
    \begin{algorithmic}[1]
        \Upon{receiving $(\texttt{FAST-SHARE},\rd,\m,\sigma'_j)$ from party $j$}
        \Comment{Fast path reconstruction}
            \If{$\pver(\pk'_j,(\rd,\m),\sigma'_j) = 1$}
                \State $S'_{\rd,\m}\gets S'_{\rd,\m}\cup \{j\}$
                \If{$|S'_{\rd,\m}|\ge \thr'$ and $\bm{\sigma}[\rd]=\{\}$}
                    \State let $\bm{\sigma}[\rd]\gets \comb(S'_{\rd,\m},(\rd,\m),\{(\pk'_i,\sigma'_i)\}_{i\in S'_{\rd,\m}})$
                \EndIf
            \EndIf
        \EndUpon

        \Upon{receiving $(\texttt{SLOW-SHARE},\rd,\m,\sigma_j)$ from party $j$}
        \Comment{Slow path reconstruction}
            \If{$\pver(\pk_j,(\rd,\m),\sigma_j) = 1$}
                \State $S_{\rd,\m}\gets S_{\rd,\m}\cup \{j\}$
                \If{$|S_{\rd,\m}|\ge \thr$ and $\bm{\sigma}[\rd]=\{\}$}
                    \State let $\bm{\sigma}[\rd]\gets \comb(S_{\rd,\m},(\rd,\m),\{(\pk_i,\sigma_i)\}_{i\in S_{\rd,\m}})$
                \EndIf
            \EndIf
        \EndUpon
    \end{algorithmic}
    
    \medskip\hrule\medskip
    OUTPUT is same as~\Cref{alg:strawman}
  \end{minipage}
\end{algorithm}

\Cref{thm:impos} claims that no \nameft protocol with ramp thresholds can {\em always} guarantee that the global finalization and reconstruction occur simultaneously, implying that the latency of \nameft may be unavoidable.  
Fortunately, we can circumvent this impossibility result in optimistic executions.
In this section, we describe a simple protocol named {\em \fast} that, for any \round \rd, achieves $\grtr=\gftr$ under synchronized executions, and $\latency_{\rd}=0$ under optimistic executions~\footnote{Similar to~\Cref{sec:tight}, we require error-free for the $\latency_{\rd}=0$ claim. However, the honest parties can reconstruct the \tc output via fast path as long as $|\cH|\geq \thr'$ (with $\latency_{\rd}>0$).}. As we illustrate in \Cref{sec:rand:eval}, in practice, our new protocol achieves significantly lower latency compared to the strawman protocol. 

The key observation from~\Cref{thm:impos} is that, to ensure the same global finalization and reconstruction time, the secrecy threshold cannot be lower than the finalization threshold; otherwise, the \tc output could be revealed before \mbbft finalizes a message. A naive way to address this is to increase the secrecy threshold to match the finalization threshold, i.e., $\ths=\thf$. However, the issue is that, since the threshold is ramped, $\thr>\ths=\thf=n-t$ (\mbbft protocols with optimal resilience typically have $\thf=n-t$ to ensure quorum intersection for safety), there may not be enough honest shares for parties to reconstruct \tc\ output upon finalizing the \mbbft\ output, violating the Termination property. 

We address this issue as follows:
First, we share the \tc\ secret $s$ among the parties twice, using independent randomness, with two different pairs of thresholds $(\ths,\thr)$ and $(\thr',\ths')$. 
The new thresholds are set to be  $\ths'=\thf, \thr'=\min(\ths'+(\thr-\ths), n)$.
Let $\{\sch{s}\}_{i\in [n]}$ and $\{\sch{s}'\}_{i\in [n]}$ be the secret shares of $s$ with thresholds $(\ths,\thr)$ and $(\ths',\thr')$, respectively. 
%
Second, we add a \emph{\fast}, where parties reveal their \tc\ output shares they compute with $\{\sch{s}'\}_{i\in [n]}$ immediately upon prefinalizing a message. 

Our final protocol is in~\Cref{alg:ramp}.
The setup phase is similar to~\Cref{alg:strawman}, except the same secret $s$ is shared twice, using independent randomness, for the \slow and \fast, respectively. 
Each party does the following:
\begin{itemize}[leftmargin=*]
    \item \emph{Fast path:} When a party prefinalizes a message $\m$ for \round \rd, it reveals its \tc output share $\peval(\sch{s}'_i,(\rd,\m))$. Once a party receives $\thr'$ verified shares of the \fast, it reconstructs the \tc output.
    \item \emph{Slow path:} Upon \mbbft\ finalization for message \m\ and \round \rd, each party $i$ reveals its \tc\ output share $\peval(\sch{s}_i,(\rd,\m))$. Next, any party who has not received $\thr'$ verified \tc\ output shares from the \fast waits to receive $\thr$ verified \tc\ output shares from the \slow. Once the party receives $\thr$ verified shares of the \slow, it reconstructs the \tc\ output.
\end{itemize}

Lastly, similarly to~\Cref{alg:strawman}, to guarantee Total Order, the parties push the finalized rounds into the FIFO $\queue$ and output the result round-by-round once either the \fast or \slow has reconstructed the \tc output. 
So the latency of the protocol is the minimum latency of the two paths.

Note that, a party reveals its share of the \slow even if it has revealed its share of the \fast or reconstructed the \tc output from the \fast. 
This is crucial for ensuring Termination, because with corrupted parties sending their \tc output shares to only a subset $\cS$ of honest parties, it is possible that only parties in $\cS$ can reconstruct the \tc output from the \fast. If honest parties in $\cS$ do not reveal their shares of the \slow, the remaining honest parties cannot reconstruct the \tc output, thereby losing the Termination guarantee. 
We defer the protocol analysis 
\ifsubmission
to the full version~\cite{xiang2024latency}
\else
to~\Cref{app:rand}
\fi
due to space constraints.

\section{Distributed Randomness: A Case Study}
\label{sec:rand}
In this section, we implement and evaluate distributed randomness as a concrete example of \longname, to demonstrate the effectiveness of our solution in reducing the latency for real-world blockchains.
We implement the \fast (\Cref{alg:ramp}) for \posrand~\cite{das2024distributed}, which is a distributed randomness scheme designed for proof-of-stake blockchains and is deployed in the Aptos blockchain~\cite{aptos} to enable smart contracts to use randomness~\cite{randomness-api}. We then compare our latency (using both micro-benchmarks and end-to-end evaluation) with the~\cite{das2024distributed}, that implements the strawman protocol (\Cref{alg:strawman}).

In the rest of the section, we first provide a very brief overview of~\cite{das2024distributed} and our implementation of fast-path protocol atop their scheme. Due to space constraints, more details are deferred 
\ifsubmission
to the full version~\cite{xiang2024latency}.
\else
to~\Cref{app:rand}.
\fi

\para{Overview of \posrand~\cite{das2024distributed}.}
\posrand~\cite{das2024distributed} is a distributed randomness protocol for proof-of-stake blockchains where each party has a (possibly unequal) stake, and the blockchain is secure as long as the adversary corrupts parties with combined stake less than $1/3$-th of the total stake. 
Since the total stake in practice can be very large, \cite{das2024distributed} first assigns approximate stakes of parties to a much smaller value called {\em weights}, and this process is called {\em rounding}.
Parties in~\cite{das2024distributed} then participate in a publicly verifiable secret sharing~(PVSS) based distributed key generation~(DKG) protocol to receive secret shares of a \tc\ secret $s$. 
After DKG, \posrand~\cite{das2024distributed} implements the weighted extension of \slow (\Cref{alg:strawman}) for \longname (\Cref{def:btc}), where they use a distributed verifiable unpredictable function~(VUF) as the \tc\ protocol. More precisely, for each finalized block, each party computes and reveals its VUF shares. Next, once a party receives verified VUF shares from parties with combined weights greater than or equal to $w$, it reconstructs the VUF output.

\para{Implementation of \fast (\Cref{alg:ramp}).}
Recall that the \fast requires sharing the same secret with two sets of thresholds, i.e., $\ths<\thr$ for the \slow and $\ths'<\thr'$ for the \fast. 
Consequently, we augment the rounding algorithm of~\cite{das2024distributed} to additionally take $(\ths',\thr')$ as input, and output the weight threshold for the \fast. 
To setup the secret-shares of the \tc\ secret, we use the DKG protocol of~\cite{das2024distributed} with the following minor modifications. Each party starts by sharing the same secret independently using two weight thresholds $w$ and $w'$. The rest of the DKG protocol is identical to~\cite{das2024distributed}, except parties agree on two different aggregated PVSS transcript instead of one. Note that, these doubles the computation and communication cost of DKG.
As described in~\Cref{alg:ramp}, parties reveal their VUF shares (\tc output shares) for the \fast upon prefinalizing a block, and for the \slow upon finalizing a block.
For both paths, the parties collect the VUF shares and are ready to execute the block as soon as the randomness (\tc output) is reconstructed from either path. 

\subsection{Evaluation Results.}
\label{sec:rand:eval}

\begin{table}[t!]
    \centering
    \small
    \setlength\tabcolsep{3pt}
    \begin{tabular}{cccccc}
        \toprule
        Scheme & \makecell{Randomness \\ Latency} & \makecell{Consensus \\ Latency} & \makecell{Overhead on top \\ of Consensus}  \\
        \midrule
        \posrand~\cite{das2024distributed} & 85.5 ms & 362 ms & 23.6\% \\
        our \fastrand & 24.7 ms & 362 ms & 6.8\% \\
        \bottomrule
    \end{tabular}
    \vspace{1mm}
    \caption{Latencies of \posrand~\cite{das2024distributed} and our \fastrand on \aptos mainnet (July 2024).}
    \vspace{-3mm}
    \label{tab:rand}
\end{table}

We implement our \fastrand protocol (\Cref{alg:ramp}) in Rust, atop the open-source \posrand~\cite{das2024distributed} implementation~\cite{aptos} on the \aptos blockchain.
We worked with \aptos Labs to deploy and evaluate our protocol on the \aptos mainnet.

\para{Setup and metrics.}
As of July 2024, the \aptos blockchain is run by $140$ validators, distributed $50$ cities across $22$ countries with the stake distributed described in~\cite{aptos-geo}. 
The $50$-th, $70$-th and $90$-th percentile (average) of round-trip latency between the blockchain validators is approximately $150$ms, $230$ms, and $400$ms, respectively.
Due to space limitation, we defer other details of the evaluation setup 
\ifsubmission
to the full version~\cite{xiang2024latency}.
\else
to~\Cref{app:rand:eval}.
\fi
We measure the {\em randomness latency} as the duration required to generate randomness for each block, as in~\Cref{def:latency}.
It measures the duration from the moment the block is finalized by consensus to the when the randomness for that block becomes available. 
We report the average randomness latency (measured over a period of 12 hours).
We also measure and compare the setup overhead for \posrand~\cite{das2024distributed} and \fastrand, using micro-benchmarks on machines of the same hardware specs as the \aptos mainnet.

\one
\para{Randomness latency.} \Cref{tab:rand} summarizes the latency comparison of our \fastrand and \posrand~\cite{das2024distributed}. As observed, \fastrand significantly reduces the randomness latency of \posrand by 71\%, from 85.5 ms to 24.7 ms.
As mentioned in~\Cref{sec:ramp:opt}, the small latency overhead of \fastrand comes from the fact that honest parties may need to wait slightly longer after local finalization to receive additional shares from the \fast, since the reconstruction threshold of the \fast is higher than the finalization threshold. 
To show the significance of the latency improvement for consensus end-to-end latency, we also measure the consensus latency 362 ms as the duration of each block from proposed to finalized. As shown in the table, \fastrand improves the latency overhead from $23.6\%$
to $6.8\%$ in terms of the consensus latency.
The latency of the \slow in our \fastrand protocol (\Cref{alg:ramp}) is comparable to that of \posrand~\cite{das2024distributed} and is therefore omitted for brevity. 

\ifsubmission
\else
\section{Related Work}
\label{app:related}

\para{Latency in blockchains.}
Latency reduction in blockchain has been an important problem for decades. Many works focused on reducing the consensus latency under partial synchrony, for leader-based BFT protocols~\cite{martin2006fast,abraham2021good,gueta2019sbft,gelashvili2022jolteon,jalalzai2020fast,doidge2024moonshot} and DAG-based BFT protocols~\cite{spiegelman2022bullshark,spiegelman2023shoal,arun2024shoal++,keidar2022cordial,babel2023mysticeti}. 
Several other works add a fast path to the consensus protocol, allowing certain transactions to be finalized faster~\cite{baudet2020fastpay,baudet2023zef,sui}.
Research has also focused on improving the latency in consensus protocols in various optimistic scenarios. 
Optimistic BFT protocols aim to achieve small latencies under certain optimistic conditions~\cite{abraham2020sync,dutta2005best,song2008bosco,kotla2007zyzzyva}.
For instance, the optimistic fast paths in~\cite{abraham2020sync,shrestha2020optimality} require more than $3n/4$ parties to be honest in synchrony, while those in~\cite{kotla2007zyzzyva,gueta2019sbft} require all $n$ parties to vote under partial synchrony.

\one
\para{Threshold cryptography in blockchains.}
In many applications of modern blockchains, threshold cryptography plays a vital role. 
An important example of \longname is to generate distributed randomness for blockchains~\cite{das2024distributed,hanke2018dfinity}.
where existing real-world deployments~\cite{das2024distributed,hanke2018dfinity,aptos,sui,flow} follow the threshold VRF-based approach.
Numerous blockchain research \\
\cite{reiter1994securely,yang2022sok,asayag2018fair,zhang2023f3b,momeni2022fairblock,kavousi2023blindperm,malkhi2022maximal,bebel2022ferveo} focuses on MEV countermeasures and privacy enhancement using threshold decryption, where transactions are encrypted with threshold decryption and only revealed and executed by blockchain parties after finalization. 
To the best of our knowledge, all existing constructions supporting the \longname mentioned above incur an additional round of latency, except~\cite{bebel2022ferveo}. 
In~\cite{bebel2022ferveo}, threshold decryption is employed to mitigate MEV while using Tendermint as the consensus protocol, with parties revealing their decryption shares when voting for a block. However, the described approach is insecure, as an adversary could decrypt the block before it is finalized. This vulnerability arises from an incorrect description of the Tendermint consensus mechanism, which mistakenly assumes only a single round of consensus voting.
Furthermore, their approach is less general compared to the results presented in this paper.

\fi

\section*{Acknowledgments}
We would like to thank Alin Tomescu, Andrei Tonkikh, and Benny Pinkas for helpful discussions.

%
%
%
%

\bibliographystyle{splncs04}
\bibliography{references}

\ifsubmission
\else

\clearpage

\appendix
\section{Preliminaries}
\label{app:prelim}

\subsection{Cryptography Definitions}
\label{app:prelim:crypto}
We formalize the robustness property using the \rbcma\ game in Game~\ref{game:rbcma}. Intuitively, the robustness property ensures that the protocol behaves as expected for honest parties, even in the presence of an adversary that corrupts up to $t$ parties.
More precisely, it says that:
\begin{enumerate*}[label=(\roman*)]
    \item $\pver$ should always accept honestly generated \tc output shares and
    \item if we combine $\thr$ valid \tc output shares (accepted by $\pver$) using the $\comb$ algorithm, the output of $\comb$ should be equal to $\eval(s,\val)$, except with a negligible probability. 
\end{enumerate*}
The latter requirement ensures that maliciously generated \tc output share cannot prevent honest parties from efficiently computing $\eval(s,\val)$ (except with a negligible probability). Note that we allow \adv\ to generate \tc output share arbitrarily. Also, we can achieve robustness even if \adv\ learns shares of all parties.

\begin{game}[Robustness Under Chosen Message Attack]
\label{game:rbcma}
For a $(n,\ths,\thr)$-threshold cryptosystem \tc\ we define the game \rbcma\ in the presence of adversary \adv\ as follows:
\begin{itemize}[leftmargin=*]
    \item {\bf Setup.} $\adv$ specifies a set $\cC\subset [n]$, with $|\cC|< \ths$ of corrupt parties. Let $\cH :=[n]\setminus\cC$ be the set of honest parties. 
    \one
    \item {\bf Share generation}. Run $\sgen(s)$ to generate the shares of $s$. $\adv$ learns $\sch{s}_i$ for each $i\in\cC$ and all the public keys $\{\pk_1,\ldots,\pk_n\}$.
    \one
    \item {\bf Function evaluation shares.} \adv\ submits a tuple $(i,\val)$ for some $i\in\cH$ and $\val \in \cX$ as input and receives $\sigma_i\gets \peval(\sch{s}, \val)$.
    \one
    \item {\bf Output determination.}  Output 1 if either of the following happens; otherwise, output 0.
    \begin{enumerate}
        \item \adv\ outputs $(i,\val)$ such that $\pver(\pk_i,\peval(\sch{s}_i,\val))=0$;
        \item \adv\ outputs $(S,\{\sigma_i\}_{i\in S},\val)$ where $S\subseteq [n]$ with $|S|\ge \thr$ and $\pver(\pk_i,\sigma_i,\val)=1$ for all $i\in S$, such that $\comb(S,\{\pk_i,\sigma_i\},\val)\ne \eval(s,\val)$.
    \end{enumerate} 
\end{itemize}
\end{game}

\begin{definition}[Robustness Under Chosen Message Attack]
\label{def:tcai-robust}
Let $\tc=(\setup,\sgen,\eval,\peval,\pver,\comb,\ver)$ be a $(n,\ths,\thr)$-threshold cryptosystem. Consider the game $\rbcma$ defined in Game~\ref{game:rbcma}. We say that \tc\ is $\rbcma$ secure, if for all PPT adversaries \adv, the following advantage is negligible, i.e.,
\begin{equation}
    \Pr[\rbcmap\Rightarrow 1] =\negl(\lambda)
\end{equation}
\end{definition}

Next, we describe the secrecy property. Intuitively, the secrecy property ensures that $\eval(s,\val)$ remains hidden from an adversary $\adv$\ that corrupts up to $\ths$ parties, where the precise notion of ``hidden'' depends on the application. 
For example, when \tc\ is a threshold decryption scheme, i.e., $\eval(s,\val)$ outputs the decryption of the ciphertext $\val$ using secret key $s$, the Secrecy property requires that \tc\ is semantically secure in the presence of an attacker \adv\ that corrupts up to $t$ parties.

Looking ahead, we will use a distributed randomness beacon as our concrete application (see~\Cref{app:rand}), with unpredictability under chosen message attack as our secrecy property. Intuitively, the unpredictability property ensures that an adversary corrupting less than $\ths$ parties can not compute $\eval(s,\val)$ for a value $\val$ for which it has seen less than $\ths-|\cC|$ \tc output shares. We formalize this with the \upcma\ game in Game~\ref{game:upcma}. 

%
\begin{game}[Unpredictability Under Chosen Message Attack]
\label{game:upcma}
For a $(n,\ths,\thr)$-threshold cryptosystem \tc\ the game \upcma\ in the presence of adversary \adv\ as follows:
\begin{itemize}[leftmargin=*]
    \item {\bf Setup.} $\adv$ specifies two sets $\cC,\cS\subset [n]$, with $|\cC \cup \cS|< \ths$. Here, $\cC$ is the set of corrupt parties and $\cS$ is the set of honest parties that $\adv$ queries for \tc output shares on the forged input,. Let $\cH :=[n]\setminus\cC$ be the set of honest parties.  
    \item The share generation, and function evaluation shares steps are identical to the \rbcma\ game. 
    \item {\bf Output determination.}  \adv\ outputs $(\val^*,\eval(s,\val^*))$. Output 1 if \adv\ has queried for \tc output share on $\val^*$ from only parties in $\cS$. Otherwise, output $0$.
\end{itemize}
\end{game}

\begin{definition}[Unpredictability Under Chosen Message Attack]
\label{def:tcai-unpred}
Let $\tc= (\setup,\sgen,\eval,\peval,\pver,\comb,\ver)$ be a $(n,\ths,\thr)$-threshold cryptosystem. Consider the game $\upcma$ in Game~\ref{game:upcma}. We say that \tc\ is $\upcma$ secure, if for all PPT adversaries \adv, the following advantage is negligible, i.e.,
\begin{equation}
    \Pr[\upcmap\Rightarrow 1] =\negl(\lambda)
\end{equation}
\end{definition}

\section{Blockchain-Native Threshold Cryptosystem}
\label{app:btc}

The following lemma is a direct implication of the \longname definition (\Cref{def:bntc}). 
Intuitively, a \longname should not reconstruct the \tc output before a message is finalized. 

\begin{lemma}\label{lem:impos}
    For any \longname protocol, for any execution, and for all \round $r\in \N$, we have $\grtr\geq\gftr$. 
\end{lemma}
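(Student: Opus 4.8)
The plan is to argue by contradiction, exploiting the \emph{Secrecy} property of a \longname together with the definitions of global reconstruction (\Cref{def:grt}) and global finalization (\Cref{def:gft}). Suppose toward a contradiction that in some execution $\cE$ and for some \round $\rd$ we have $\grtr < \gftr$. By \Cref{def:gft}, just before time $\gftr$ the system is in a multivalent state for \round $\rd$: there exist two extensions $\cE' \ne \cE''$ of the current configuration in which some honest party outputs $\fin(\rd,\m')$ and $\fin(\rd,\m'')$ respectively, with $\m' \ne \m''$ (this uses that before global finalization the state is, by definition, not univalent). On the other hand, by \Cref{def:grt}, at time $\grtr < \gftr$ the adversary has already learned $\eval(s,(\rd,\m))$ for whatever input $\m$ was used to invoke the \tc\ protocol in $\cE$ up to that point.

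The key step is to pin down which input $\m$ the \tc\ protocol was run on at time $\grtr$, and to show it must differ from one of $\m',\m''$. Since at time $\grtr$ the round is still multivalent, no honest party has output $\fin(\rd,\cdot)$ yet (local finalization cannot precede global finalization — this is noted right after \Cref{def:gft}), so by the protocol description in \Cref{def:bntc} (step 3), honest parties only contribute \tc\ output shares for \round $\rd$ after finalizing $\rd$; hence any shares the adversary has collected on a \round-$\rd$ input are either self-generated from corrupt parties or were obtained in the two hypothetical futures. Working inside extension $\cE''$ (the one finalizing $\m'' \ne \m$, where $\m$ is the value reconstructed in $\cE$), an honest party eventually outputs $(\rd,\m'',\sigma'')$; but the adversary, having learned $\eval(s,(\rd,\m))$ by time $\grtr$ along the common prefix, can compute $\eval(s,(\rd,\m))$ with $\m \ne \m''$, directly contradicting the \emph{Secrecy} clause of \Cref{def:bntc}, which forbids the adversary from computing $\eval(s,(\rd,\m'))$ for any $\m' \ne \m''$ once an honest party outputs $(\rd,\m'',\sigma'')$.

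I expect the main obstacle to be making the "which input was reconstructed" argument fully rigorous in the multivalent regime: one must carefully rule out the degenerate case where the value reconstructed by time $\grtr$ happens to coincide with \emph{both} possible finalization outcomes — impossible since $\m' \ne \m''$ forces it to differ from at least one — and one must ensure the adversary's view along the common prefix is genuinely consistent with the chosen hypothetical extension, so that the secrecy game is violated in that extension. The remaining details (translating "the adversary learns $\eval(s,(\rd,\m))$" into a winning strategy in the relevant secrecy game, and invoking partial synchrony/Termination only to guarantee that the honest output $(\rd,\m'',\sigma'')$ actually materializes in $\cE''$) are routine.
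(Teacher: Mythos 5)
Your proof is correct and follows essentially the same route as the paper's: at time $\grtr$ the adversary already knows $\eval(s,(\rd,\m))$ for some fixed $\m$ while the round is still multivalent, so at least one of the two divergent extensions finalizes a message different from $\m$, and Secrecy is violated in that extension. The digression about honest parties only revealing shares after local finalization is unnecessary (and would not hold verbatim for protocols that reveal shares at prefinalization, which the lemma must cover), but the core contradiction does not rely on it.
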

\begin{proof}
    Let $\name=(\mbb,\tc)$ denote the protocol. 
    Suppose that $\grtr<\gftr$ in some execution. 
    By the definition of $\gftr$ and $\grtr$, the adversary learns the \tc output of \round \rd at time $\grtr$, before the message is globally finalized for \rd by $\mbb$ at time $\gftr$. According to~\Cref{def:gft}, the system is in multivalent state for \rd at time $\grtr$, which means there exist two execution extensions of $\mbb$ where the honest parties output different messages for \rd. Then, the Secrecy (\Cref{def:bntc}) of \name is violated for at least one of the execution, contradiction.
\end{proof}

\section{Tight Blockchain-Native Threshold Cryptosystem}
\label{app:tight}

\subsection{Analysis of~\Cref{alg:tight}}
%
%
\begin{theorem}\label{thm:tight:correctness}
    \Cref{alg:tight} implements a \longname and guarantees the Agreement, Termination, Validity, Total Order, and Secrecy properties. 
\end{theorem}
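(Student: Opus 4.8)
The plan is to verify each of the five properties of \Cref{def:bntc} in turn, reducing consensus-layer claims to the guarantees of the underlying \mbbft\ protocol (\Cref{def:ft}) and cryptographic claims to the \rbcma\ and \upcma\ security of \tc. The crucial structural observation is that \Cref{alg:tight} invokes $\fin(\rd,\m)$ exactly when a party has collected $\thf$ many $(\PREFIN,\rd,\m)$ messages, which by \Cref{def:ft} coincides with $\m$ being globally finalized for \rd; thus the protocol's local finalization is a faithful implementation of \mbbft's finalization. Consequently \emph{Agreement} and \emph{Total Order} for the consensus component are inherited directly: if two honest parties output $(\rd,\m,\sigma)$ and $(\rd,\m',\sigma')$ then both finalized via $\thf$ \PREFIN\ messages, and since an honest party sends $(\PREFIN,\rd,\cdot)$ for at most one message in $\cM$ (and $\bot$ only afterwards), a quorum-intersection argument using $\thf \le n-t$ and $\thf > t$ (which follows from $\thf\ge t+1$, needed already for \mbbft) forces $\m=\m'$; then $\sigma=\sigma'$ because $\comb$ is deterministic on the reconstructed value and, by \emph{Robustness} (\Cref{def:tcai-robust}), any $\thr=\thf$ valid shares combine to $\eval(s,(\rd,\m))$ except with negligible probability. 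Total Order follows since $\queue$ is FIFO and rounds are pushed in finalization order, which respects \mbbft's Total Order.

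Next I would argue \emph{Termination} and \emph{Validity}. After GST, \mbbft's Termination gives that each honest party eventually prefinalizes and then finalizes some $\m$ for \rd; when it finalizes it broadcasts $(\share,\rd,\m,\sigma_i)$ if it has not already. Because the $\thf$ parties that prefinalized $\m$ include at least $\thf-|\cC| \ge \thr - t$ honest parties, and since we set $\thr=\thf\le n-t$, every honest party eventually receives at least $\thr$ valid \share\ messages for $(\rd,\m)$: here ``valid'' is guaranteed because honest shares pass $\pver$ by \emph{Robustness}. So $\comb$ produces $\bsigma[\rd]\ne\{\}$, the round reaches the top of the FIFO queue (inductively, all earlier rounds terminate too), and the party outputs $(\rd,\m,\sigma)$. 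Validity is the same argument specialized to the case where $B_\rd$ is honest and broadcasts $\m$ after GST, invoking \mbbft's Validity to conclude all honest parties finalize that specific $\m$, hence output $(\rd,\m,\eval(s,(\rd,\m)))$ (using Robustness to identify the combined value with $\eval$).

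For \emph{Secrecy}, the first conjunct ($\sigma=\eval(s,(\rd,\m))$ whenever an honest party outputs) is again Robustness. The substantive part is that the adversary cannot compute $\eval(s,(\rd,\m'))$ for any $\m'\ne\m$. The key point is that honest parties only ever send a \share\ message for $(\rd,\m'')$ if they prefinalized or finalized $m''$ for round \rd; an honest party prefinalizes at most one $m''\in\cM$, and if the system ever locally finalizes $\m$ then by the quorum argument no honest party prefinalized any $\m''\ne\m$ in $\cM$, so honest parties emit \share\ messages for $(\rd,\m')$ with $\m'\ne\m$ only in the form $(\share,\rd,\bot,\cdot)$ — never for a distinct non-$\bot$ value. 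Hence for $\m'\ne\m$ the adversary sees at most $|\cC|\le t < \ths$ shares on input $(\rd,\m')$ (the corrupt parties' own shares), i.e.\ it has queried \tc\ output shares on $(\rd,\m')$ only from parties in a set of size $<\ths$; so an adversary computing $\eval(s,(\rd,\m'))$ would win the \upcma\ game, contradicting \upcma\ security. I would package this as a reduction: given such an adversary, build an \upcma\ adversary that declares $\cC$ and $\cS=\emptyset$ (or $\cS$ the honest prefinalizers of $\m$ restricted away from $\m'$), simulates the protocol, and forwards the forgery.

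The main obstacle I anticipate is the Secrecy argument, specifically making airtight the claim that \emph{no} honest party ever releases a share on a non-$\bot$ value $\m'\ne\m$ once $\m$ is (even locally, at a single honest party) finalized. This requires carefully combining (i) the \mbbft\ invariant that each honest party prefinalizes at most one message of $\cM$, (ii) the fact that $\bot$ may only be prefinalized \emph{after} a non-$\bot$ value, and (iii) the quorum-intersection consequence of $\thf$-finalization — namely that if $\thf$ parties prefinalized $\m$ then fewer than $\thf$ parties can have prefinalized any competing $\m''\in\cM$, and in particular no honest party did (using $2\thf - |\cC| > n$, i.e.\ $\thf > (n+|\cC|)/2$, which one must check is implied by the standard \mbbft\ parameter regime $\thf \ge 2t+1$, $n\ge 3t+1$). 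I would also need to handle the subtlety that the adversary might try to get honest shares on $(\rd,\m')$ by delivering \PREFIN\ messages so as to make some honest party prefinalize $\m'$ before $\m$ is finalized anywhere — but this is exactly excluded once we fix attention on executions where some honest party has locally finalized $\m$, and for rounds never finalized the Secrecy clause is vacuous since no honest party outputs. A clean way to organize this is to first prove a lemma: \emph{if an honest party locally finalizes $\m$ for \rd, then every honest \share\ message for round \rd is for $\m$ or for $\bot$}, and then the \upcma\ reduction is immediate.
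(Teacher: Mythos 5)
Your overall decomposition matches the paper's: Agreement, Termination, Validity and Total Order are inherited from \mbbft\ (together with Robustness and the FIFO queue), and Secrecy is handled by a reduction to \upcma\ via counting how many honest parties could have revealed \tc\ output shares on a competing value. (A minor remark: you re-derive \mbbft's Agreement through quorum intersection; the paper simply invokes the Agreement property granted by \Cref{def:mbb}, which is cleaner and avoids importing parameter assumptions such as $\thf>(n+|\cC|)/2$ that are not part of \Cref{def:ft}.)

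The genuine gap is in the key lemma you propose for Secrecy. You claim that once some honest party locally finalizes $\m$, \emph{no} honest party prefinalizes any competing $\m''\in\cM$, inferring this from ``fewer than $\thf$ parties can have prefinalized $\m''$, and in particular no honest party did.'' That inference is invalid: quorum intersection rules out a second $\thf$-sized quorum on $\m''$, not a single honest dissenter, and \Cref{def:ft} explicitly permits a minority of honest parties to prefinalize a value different from the one that gets finalized. So the lemma is not provable from the stated definitions, and the ensuing bound ``the adversary sees at most $|\cC|$ shares on $(\rd,\m')$'' does not hold. The lemma is also stronger than necessary. The correct argument (and the paper's) is: if the adversary computes $\eval(s,(\rd,\m'))$, then by \upcma\ at least $\ths-|\cC|$ honest parties revealed $\peval(\sch{s},(\rd,\m'))$, hence by the protocol they prefinalized or finalized $(\rd,\m')$; the finalization case contradicts Agreement of \mbbft\ directly, and in the prefinalization case $\ths=\thf$ together with the ``if and only if'' clause of \Cref{def:ft} makes $\m'$ globally finalized, again contradicting Agreement with the globally finalized $\m$. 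In other words, up to $\thf-|\cC|-1$ honest parties \emph{may} reveal shares on a competing value, and this is tolerable precisely because it stays below the secrecy threshold. Note also that your lemma exempts $\bot$ from its conclusion, which would leave $\m'=\bot$ unprotected (honest parties may prefinalize $\bot$ after $\m$ and reveal shares on it); the global-finalization argument above covers $\bot$ uniformly as just another $\m'\ne\m$.
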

\begin{proof}
    Let $\nameft=(\mbbft,\tc)$ denote the protocol.

    \paragraph{Secrecy.} 
    We first prove that, for any \round $\rd\in\N$, if an honest party outputs $(\rd,\m,\sigma)$ then $\sigma=\eval(s,(\rd,\m))$. 
    By the Agreement and Termination properties of \mbbft, all honest parties output $\fin(\rd,\m)$ in \mbbft.
    By~\cref{def:ft}, 
    \begin{enumerate}
        \item There exist $\thf$ parties (or equivalently $\thf-|\cC|$ honest parties) that have sent $(\PREFIN,\rd,\m)$.
        \item If $\m\neq\bot$, then no honest party sent $(\PREFIN,\rd,\m')$ such that $\m'\in\cM$ and $\m'\ne\m$. 
        \item If $\m=\bot$, then there exists at most one $\m'\in\cM$ such that any honest party may send $(\PREFIN,\rd,\m')$. 
    \end{enumerate}
    
    For the sake of contradiction, suppose that for some $\rd\in\N$, an honest party $h$ outputs $(\rd,\m,\sigma')$ where $\sigma'\ne\eval(s,(\rd,\m))$. 
    As per the protocol and the Robustness property of \tc, $\sigma'=\eval(s,(\rd,\m'))$ for some $\m'\ne\m$. 
    Note that in \Cref{alg:tight} an honest party reveals its \tc\ output share for any $(\rd,\m')$ only upon prefinalizing $(\rd,\m')$ or finalizing $(\rd,\m')$. 
    This implies, by the Unpredictability property of \tc, that at least $\ths-|\cC|$ honest parties have revealed their \tc\ output shares  $\peval(\sch{s},(\rd,\m'))$. Let $T_{m'}$ be the set indicating these honest parties. 
    \begin{itemize}
        \item If any honest party $i\in T_{m'}$ finalizes a message $(\rd,\m')$ for $m'\ne m$, then this violates the Agreement property of \mbbft, and hence a contradiction. 
        \item This implies that all parties in $T_{m'}$ has sent $(\PREFIN,\rd,\m')$ in \mbbft. If $\m\neq\bot$, then no honest party should have sent $(\PREFIN,\rd,\m')$ such that $\m'\in\cM$ and $\m'\ne\m$, contradiction. If $\m=\bot$, then all $|T_{m'}|\geq \ths-|\cC|=\thf-|\cC|$ honest parties in $T_{m'}$ has sent $(\PREFIN,\rd,\m')$ in \mbbft. This implies that the message $\m'$ is globally finalized, again violating the Agreement property of \mbbft, hence a contradiction. 
    \end{itemize}
    Therefore, for any \round $\rd\in\N$, if an honest party outputs $(\rd,\m,\sigma)$ then $\sigma=\eval(s,(\rd,\m))$.

    For any corrupted party, the same argument above applies; thus the adversary cannot learn $\eval(s,(\rd,\m'))$ where $\m'\ne\m$.
    
    \paragraph{Agreement.} 
    Suppose an honest party outputs $(\rd,\m,\sigma)$ and another honest party outputs $(\rd,\m',\sigma')$. 
    The Agreement property of \mbbft\ ensures that $\m=\m'$. Therefore, by the Secrecy property of \nameft\ above, we get $\sigma=\sigma'$.

    \paragraph{Termination.} 
    The Termination property of \mbbft\ requires that $\thf\leq n-t$, since the honest party needs to finalize the message even when the corrupted parties all remain slient. 
    After GST, by the Agreement and Termination property of \mbbft, for any \round $\rd\in\N$ all honest parties eventually output the same $\fin(\rd,\m)$ where $\m\in\cM\cup\{\bot\}$. 
    Then, eventually, all $n-t$ honest parties send their $\peval(\sch{s},(\rd,\m))$ to all parties according to the prefinalization step of the protocol. 
    By the Robustness property of \tc, all honest parties can eventually reconstruct $\sigma=\eval(s,(\rd,\m))$ since $\thr=\thf\leq n-t$.
    Therefore, after GST, each honest party eventually outputs $(\rd,\m,\sigma)$ for some $\sigma$. 
    
    \paragraph{Validity.} 
    By the Validity property of \mbbft, all honest parties eventually output $\fin(\rd,\m)$ in \mbbft, and will eventually output $(\rd,\m,\sigma')$ by Termination of \name.
    Since if any honest party outputs $(\rd,\m,\sigma')$ then $\sigma'=\eval(s,(\rd,\m))$ by the Secrecy property of \name, we conclude that all honest parties eventually output $(\rd,\m,\eval(s,(\rd,\m)))$. 

    \paragraph{Total Order.} 
    In the protocol, an honest party always outputs $(\rd,\m,\sigma)$ according to the order of $\rd$ in the FIFO $\queue$. 
    Suppose that an honest party outputs $(\rd,\m,\sigma)$ before $(\rd',\m',\sigma')$ in \name, then the party enqueues $\rd$ before $\rd'$. This implies that the honest party outputs $\fin(\rd,\m)$ before $\fin(\m',\rd')$ in \mbbft. By the Total Order property of \mbbft, we conclude $\rd<\rd'$ since an honest party outputs $\fin(\rd,\m)$ before $\fin(\m',\rd')$. Therefore, the protocol satisfies Total Order: if an honest party outputs $(\rd,\m,\sigma)$ before $(\rd',\m',\sigma')$, then $\rd<\rd'$.

    \qed
\end{proof}

\begin{theorem}
    \Cref{alg:tight} achieves $\gftr=\grtr$ for any \round $\rd\in\N$ in any execution. 
\end{theorem}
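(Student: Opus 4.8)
The plan is to show the two inequalities $\grtr \le \gftr$ and $\grtr \ge \gftr$ separately; the second direction is already handled by \Cref{lem:impos}, which applies to \emph{any} \longname protocol and hence to \Cref{alg:tight}. So the real work is proving $\grtr \le \gftr$, i.e. that the \tc\ output is globally reconstructed no later than the message is globally finalized.

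First I would unwind the definitions of both times in terms of prefinalization. By \Cref{def:ft}, a message $\m$ is globally finalized for \round\ \rd\ exactly when $\thf$ parties (equivalently $\thf - |\cC|$ honest parties) have called $\prefin(\rd,\m)$; so $\gftr$ is the earliest physical time at which this happens for some $\m$. On the other side, by \Cref{def:grt}, $\eval(s,(\rd,\m))$ is globally reconstructed when $\thr - |\cC|$ honest parties have revealed their \tc\ output shares on $(\rd,\m)$; so $\grtr$ is the earliest time this occurs. The key protocol fact from \Cref{alg:tight} is that an honest party sends its \share\ message $(\share,\rd,\m,\sigma_i)$ \emph{at the same moment} it sends $(\PREFIN,\rd,\m)$ — they are emitted together in the prefinalization step. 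Hence, at any physical time, the set of honest parties that have revealed a \tc\ output share on $(\rd,\m)$ is exactly the set that has called $\prefin(\rd,\m)$ (ignoring the $\bot$-prefinalization subtlety and the finalization-triggered resend, which can only add shares earlier or simultaneously, never later).

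The second step is then a direct counting argument: since $\ths = \thr = \thf$ in the setup of \Cref{alg:tight}, the threshold of honest prefinalizers needed for global finalization, $\thf - |\cC|$, equals the threshold of honest share-revealers needed for global reconstruction, $\thr - |\cC|$. Therefore the instant the $(\thf-|\cC|)$-th honest party calls $\prefin(\rd,\m)$ — which is the defining instant for $\gftr$ — is precisely the instant the $(\thr-|\cC|)$-th honest \tc\ output share on $(\rd,\m)$ becomes available, which is the defining instant for $\grtr$. This gives $\grtr \le \gftr$. Combined with \Cref{lem:impos}, we conclude $\grtr = \gftr$.

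The main obstacle I anticipate is handling the corner cases in the prefinalization machinery cleanly: an honest party may call $\prefin(\rd,\bot)$ after $\prefin(\rd,\m)$, and the finalization step may cause a party to send a \share\ even if it never explicitly prefinalized $(\rd,\m)$ in the first \Upon\ block. I would argue that neither case breaks the equality — the $\bot$-transition happens only after the $\m$-share was already sent, and the finalization-triggered resend can only make shares appear no later than the \PREFIN\ messages that caused finalization — so the set of honest parties holding a valid share on the globally-finalized message at time $\gftr$ is still exactly of size $\thf - |\cC| = \thr - |\cC|$, which suffices. One should also note that reconstruction uses \emph{valid} shares only, but by the Robustness property of \tc\ honestly generated shares always pass $\pver$, so honest shares count toward both $\grtr$ and the reconstruction in the protocol.
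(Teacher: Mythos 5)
Your proposal is correct and follows essentially the same argument as the paper: at time $\gftr$, the $\thf-|\cC|$ honest parties that have called $\prefin(\rd,\m)$ have simultaneously revealed their \tc\ output shares, and since $\thr=\thf$ this yields $\grtr\le\gftr$, with the reverse inequality supplied by \Cref{lem:impos}. Your extra care with the $\bot$-prefinalization and the finalization-triggered resend is sound but not needed beyond what the paper states.
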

\begin{proof}
    Recall from~\Cref{sec:prelim}, for any \round\ \rd, $\gftr$ and $\grtr$ are the global finalization time and global reconstruction time of \round \rd, respectively.
    Consider any execution. 
    By the definition of $\gftr$, $\thf-|\cC|$ honest parties have sent $(\PREFIN,\rd,\m)$ at time $\gftr$.
    According to~\Cref{alg:tight}, these honest parties have also revealed their \tc output shares for $(\rd,\m)$ at time $\gftr$. Since, $\thr=\thf$, this implies that the adversary can  adversary to reconstruct the \tc output at time $\gftr$, and hence $\grtr\leq\gftr$. 
    From~\Cref{lem:impos}, we have that $\grtr\geq \gftr$.
    Therefore, we conclude that $\gftr=\grtr$ for any $\rd$ in any execution. 

    \qed
\end{proof}

\begin{theorem}\label{thm:tight:latency}
    \Cref{alg:tight} achieves $\latency_{\rd}=0$ for any \round $\rd\in\N$ in any error-free execution. 
\end{theorem}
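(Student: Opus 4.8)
The plan is to trace, in an error-free execution, the causal chain from the moment a party finalizes a round in \mbbft\ to the moment the same party produces its \name\ output, and to argue that these two events coincide. In an error-free execution there are no corrupt parties, so $|\cC|=0$ and $\thf=\ths=\thr\le n$; moreover every \PREFIN\ message sent is honest and every \share\ message sent is honest and passes $\pver$ by the Robustness property of \tc. First I would fix an arbitrary round $\rd$ and an arbitrary honest party $i$, and look at the instant $T^{\sf F}_{i,\rd}$ at which $i$ outputs $\fin(\rd,\m)$. By the finalization rule of \Cref{def:ft} as implemented in \Cref{alg:tight}, $i$ outputs $\fin(\rd,\m)$ precisely when it has collected $\thf$ messages $(\PREFIN,\rd,\m)$; let $T_{\rd,\m}$ be that set of senders, with $|T_{\rd,\m}|=\thf$.

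The key observation is that in \Cref{alg:tight} every party bundles its \share\ message with its \PREFIN\ message: when a party executes the \texttt{upon} $\prefin(\rd,\m)$ block, it sends $(\PREFIN,\rd,\m)$ and $(\share,\rd,\m,\sigma_j)$ together in the same step, hence at the same physical time. Therefore, for every $j\in T_{\rd,\m}$, the message $(\share,\rd,\m,\sigma_j)$ was sent no later than $(\PREFIN,\rd,\m)$ was sent; since the two travel together and the channels deliver them along the same route, party $i$ has received $(\share,\rd,\m,\sigma_j)$ by time $T^{\sf F}_{i,\rd}$ for every $j\in T_{\rd,\m}$. Each such share passes $\pver$ (Robustness, no corruptions), so at time $T^{\sf F}_{i,\rd}$ the set $S_{\rd,\m}$ maintained by $i$ satisfies $|S_{\rd,\m}|\ge\thf=\thr$. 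The reconstruction block of \Cref{alg:strawman} (inherited by \Cref{alg:tight}) then fires, and $i$ sets $\bm\sigma[\rd]\gets\comb(\cdot)$, which by Robustness equals $\eval(s,(\rd,\m))$; simultaneously $i$ has pushed $\rd$ onto $\queue$ in the $\fin$ handler.

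It then remains to handle the output block and the FIFO queue: $i$ outputs $(\rd,\m,\sigma)$ as soon as $\rd$ reaches the top of $\queue$ and $\bm\sigma[\rd]\ne\{\}$. For this I would induct on the order in which rounds are finalized: by Total Order of \mbbft, party $i$ finalizes rounds in increasing order, so $\rd$ reaches the top of $\queue$ exactly when $i$ finalizes it, and by the previous paragraph $\bm\sigma[\rd]$ is already populated at that instant. Hence $T^{\sf O}_{i,\rd}=T^{\sf F}_{i,\rd}$. Since $i$ and $\rd$ were arbitrary, $\latency_{\rd}=\max_{i\in\cH}(T^{\sf O}_{i,\rd}-T^{\sf F}_{i,\rd})=0$. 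I expect the only delicate point to be the queue/ordering argument — making precise that no earlier round can still be blocking the top of $\queue$ when $\rd$ is finalized; this follows cleanly from Total Order of \mbbft\ together with the fact that, by the argument above, every finalized round's \tc\ output is reconstructed at its own finalization time, so the queue never stalls.
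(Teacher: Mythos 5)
Your proposal is correct and follows essentially the same route as the paper's proof: the core observation that \PREFIN\ and \share\ messages are sent together, so receiving $\thf=\thr$ \PREFIN\ messages coincides with receiving $\thr$ valid shares, followed by an induction over rounds (using Total Order) to show the FIFO queue never stalls. The only presentational difference is that you make explicit the implicit assumption (also present in the paper's argument) that two messages sent together on the same channel are delivered together.
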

\begin{proof}
    Recall from~\Cref{sec:prelim}, in any \round\ \rd, $\latency_\rd$ is the maximum time difference, across all honest parties, between the time a honest party $i$ finalizes $(\m,\rd)$ from \mbb\  and the time the same honest party $i$ outputs $(\m,\rd,\eval(s,(\m,\rd))$.
    For any fixed \round\ \rd, consider any error-free execution. 
    By the definition of \mbbft, an honest party finalizes a message once it receives \PREFIN\ messages from $\thf$ parties.
    Since all parties in an error-free execution are honest, these parties also send their \tc\ output share $\peval(\sch{s}_i,(\rd,\m))$ upon prefinalizing $(\rd,\m)$. 
    Since $\thf=\thr$, any honest party simultaneously receives $\thr$ valid shares of the form $\peval(\sch{s},(\rd,\m))$ and $\thf$ $\PREFIN$ messages for \mbbft. Therefore, every honest party finalizes $(m,r)$ and computes $\bm\sigma[\rd]=\eval(\sch{s},(\rd,m))$ for any \round \rd, simultaneously.

    Now we prove the theorem by induction on the \round number $\rd$.
    For the base case, consider $\rd=0$. We have shown that any honest party computes $\bm\sigma[\rd]$ and finalizes $(\rd,\m)$ at the same time, it also outputs $(\rd,\bm{\m}[\rd],\bm\sigma[\rd])$ at the same time since $\rd=0$ is at the top of $\queue$ by the Total Ordering property of \mbbft. Thus, $\latency_{0}=0$.
    For the induction steps, assume that the theorem is true up to \round $\rd=k-1$, that is, $\latency_{\rd}=0$ for $\rd=0,...,k-1$.
    Now consider \round $\rd=k$. Similarly, any honest party computes $\bm\sigma[\rd]$ and finalizes $(\rd,\m)$ at the same time. Due to the Total Ordering property of \mbbft, any rounds pushed in $\queue$ before $\rd=k$ must be smaller than $k$, and they have been popped when $(\rd,\m)$ is finalized since $\latency_{\rd}=0$ for $\rd=0,...,k-1$. Therefore, $\rd=k$ is at the top of $\queue$ and the party can output immediately, thus $\latency_{k}=0$.
    Therefore, by induction, the theorem holds. 
    \qed
\end{proof}

\section{Ramp Blockchain-Native Threshold Cryptography}
\label{app:ramp}

\subsection{Impossibility}
\label{app:ramp:impos}

\begin{proof}[Proof of~\Cref{thm:impos}]
    From~\Cref{lem:impos}, we have that in all executions, in each round $\rd \in \N$,  $\grtr\geq\gftr$. 
    
    For the sake of contradiction, suppose that there exists a protocol $\nameft=(\mbbft,\tc)$ with ramp thresholds, such that for any given \round\ \rd, in all executions we have $\grtr=\gftr$. 
    Let $\cE$ be one such execution and let $\tau_{\cE}=\grtr=\gftr$. 
    Let $s$ denote the secret shared between the parties. 
    Since \mbbft has the finalization threshold $\thf$, at least $\thf-|\cC|$ honest parties have sent $(\PREFIN,\rd,\m)$ at $\tau_{\cE}$.
    Similarly, by the definition of $\grtr$, at least $\thr-|\cC|$ honest parties have revealed their \tc\ output shares $\peval(\sch{s},(\rd,\m))$ at $\tau_{\cE}$. 
    Without loss of generality, we can assume that exactly $\thf-|\cC|$ honest parties have sent $(\PREFIN,\rd,\m)$ and $\thr-|\cC|$ honest parties have revealed their \tc output shares at $\tau_{\cE}$. 
    Let $h$ be any honest party that sent $(\PREFIN,\rd,\m)$ at time $\tau_{\cE}$ in the execution $\cE$. 
    There are two cases, which we denote with (1) and (2) below.
    \begin{enumerate}
        \item[(1)] $h$ also reveals its \tc output share $\peval(\sch{s}_h, (\rd,\m))$ at time $\tau_{\cE}$. 
        Consider another execution $\cE'$ identical to $\cE$ up to time $\tau_{\cE}$ with the only difference that, $h$ sends $(\PREFIN,\rd,\m)$ and reveals $\peval(\sch{s}_h,(\rd,\m))$ at time time $\tau_{\cE}+\epsilon$ for some $\epsilon>0$ (say due to asynchrony in computation or communication). Therefore, $\tau_{\cE}+\epsilon$ is the global finalization time of execution $\cE'$. 
        
        Now, consider the time $\tau_{\cE}$ in the new execution $\cE'$.
        The message $\m$ is not globally finalized at $\tau_{\cE}$ for $\rd$ since only $\thf-|\cC|-1$ honest parties have sent $(\PREFIN,\rd,\m)$.
        However, since $\thr>\ths$, there are $\thr-|\cC|-1\geq \ths-|\cC|$ honest parties that have revealed their \tc output shares $\peval(\sch{s},(\rd,\m))$. This implies that in execution $\cE'$ the adversary \adv\ can compute the \tc output $\eval(s,(m,r))$ at $\tau_{\cE}$, which is earlier than its global finalization time $\tau_{\cE}+\epsilon$. This contradicts \Cref{lem:impos}.
        
        \item[(2)] $h$ does not reveal its \tc output share at time $\tau_{\cE}$.
        Similarly, consider another execution $\cE'$ identical to $\cE$ up to time $\tau_{\cE}$ with the only difference that, $h$ sends $(\PREFIN,\rd,\m)$ at time $\tau_{\cE}+\epsilon$ for some $\epsilon >0$ (say due to asynchrony in computation or communication). Therefore, for the new execution $\cE'$, $\tau_{\cE}+\epsilon$ is the global finalization time, and the global reconstruction time remains $\tau_{\cE}$. 
        Then, for the new execution $\cE'$, the global reconstruction comes before the global finalization, contradicting~\Cref{lem:impos}. 
        \qed
    \end{enumerate}
\end{proof}

\subsection{Analysis of~\Cref{alg:ramp}}
\begin{theorem}\label{thm:ramp:correctness}
    \Cref{alg:ramp} implements a \longname and guarantees the Agreement, Termination, Validity, Total Order, and Secrecy properties. 
\end{theorem}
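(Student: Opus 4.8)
The plan is to verify each of the five properties for $\nameft=(\mbbft,\tc)$ as defined in \Cref{alg:ramp}, largely paralleling the structure of the proof of \Cref{thm:tight:correctness}, but now keeping track of the two independent sharings of $s$. First I would handle \textbf{Secrecy}, which is the heart of the argument. Suppose an honest party outputs $(\rd,\m,\sigma')$ with $\sigma'\ne\eval(s,(\rd,\m))$. Since $\comb$ only runs after $\fin(\rd,\m)$ and the outputs come round-by-round from the FIFO $\queue$, the party has locally finalized $\m$, and by the Robustness property of \tc\ the combined value must be $\eval(s,(\rd,\m''))$ for some $\m''\ne\m$. Now the key point: the adversary can only learn $\eval(s,(\rd,\m''))$ if it collects enough \tc\ output shares on $(\rd,\m'')$ from \emph{one} of the two sharings. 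For the slow-path sharing this requires $\ths-|\cC|$ honest shares, i.e.\ honest parties that called $\fin(\rd,\m'')$ (slow-path shares are only sent on finalization), contradicting Agreement of \mbbft\ with the already-finalized $\m$. For the fast-path sharing this requires $\ths'-|\cC|=\thf-|\cC|$ honest parties that called $\prefin(\rd,\m'')$ (fast-path shares are only sent on prefinalization); but $\thf-|\cC|$ honest prefinalizations of $\m''$ means $\m''$ is globally finalized, again contradicting Agreement with $\m$. (One must be slightly careful because both sharings are of the \emph{same} secret $s$ and they are combined against the \emph{same} input $\val=(\rd,\m)$; since they use independent randomness, the Unpredictability game is invoked separately per sharing, and this is exactly where the ``double sharing'' remark in \Cref{sec:prelim:crypto} is used.) The same argument applied to a corrupt party's view shows $\adv$ cannot compute $\eval(s,(\rd,\m'))$ for $\m'\ne\m$.

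Next, \textbf{Agreement} is immediate: two honest outputs $(\rd,\m,\sigma)$ and $(\rd,\m',\sigma')$ force $\m=\m'$ by Agreement of \mbbft, and then $\sigma=\sigma'=\eval(s,(\rd,\m))$ by Secrecy just proved. For \textbf{Termination}, I would use the slow-path sharing as the safety net: Termination of \mbbft\ requires $\thf\le n-t$; after GST all honest parties eventually output the same $\fin(\rd,\m)$, and then all $\ge n-t$ honest parties send $(\texttt{SLOW-SHARE},\rd,\m,\peval(\sch{s}_i,(\rd,\m)))$ unconditionally (crucially, even those who already reconstructed via the fast path), so since $\thr\le n-t$ every honest party eventually collects $\thr$ verified slow-path shares and, by Robustness of \tc, reconstructs $\sigma=\eval(s,(\rd,\m))$; it then outputs $(\rd,\m,\sigma)$ once $\rd$ reaches the top of $\queue$, which happens because every earlier round in the queue is itself eventually output by the same reasoning. \textbf{Validity} follows by combining Validity of \mbbft\ (all honest parties output $\fin(\rd,\m)$), Termination of $\nameft$ just shown (they output some $(\rd,\m,\sigma')$), and Secrecy (that $\sigma'=\eval(s,(\rd,\m))$). \textbf{Total Order} is purely a queue argument identical to the one in \Cref{thm:tight:correctness}: an honest party outputs in FIFO order of $\queue$, rounds are pushed into $\queue$ in local finalization order, and Total Order of \mbbft\ then gives $\rd<\rd'$.

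The main obstacle I expect is the Secrecy step, specifically the bookkeeping around the two sharings and the fast-path threshold constraint $\ths'=\thf$. I need to make sure that a fast-path reconstruction of $\m''\ne\m$ is genuinely blocked: the fast path uses $\thr'$-out-of-$n$ reconstruction but only $\ths'=\thf$-out-of-$n$ secrecy, so what rules out the adversary learning $\eval(s,(\rd,\m''))$ is not the reconstruction threshold $\thr'$ but the secrecy threshold $\thf$ together with the invariant that honest parties send a fast-path share for at most one $\m$ per round (inherited from the ``at most one $\prefin(\rd,\m)$ for $\m\in\cM$'' clause of \Cref{def:ft}) and only upon prefinalization. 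I should also double-check the degenerate case $\m=\bot$ (handled by the convention in \Cref{def:bntc} that parties still evaluate on $(\rd,\bot)$) and note that the $\prefin(\rd,\bot)$ transitions in \Cref{def:ft} do not trigger any fast-path share beyond what is already accounted for. Everything else reduces to invoking the stated properties of \mbbft\ and the Robustness/Unpredictability properties of \tc.
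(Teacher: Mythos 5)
Your proposal is correct and follows essentially the same route as the paper's own proof: the Secrecy argument splits into the slow-path and fast-path cases, invokes Unpredictability separately for each sharing, and derives a contradiction with Agreement of \mbbft\ (using $\ths'=\thf$ to force global finalization of $\m''$ in the fast-path case), while Agreement, Termination, Validity, and Total Order are handled exactly as in the tight-threshold proof. Your extra remarks on the double sharing and the $\m=\bot$ convention are sound refinements rather than a different approach.
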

\begin{proof}
    Let $\nameft=(\mbbft,\tc)$ denote the protocol. The proofs of Agreement, Termination, Validity and Total Order properties are identical to that of \Cref{thm:tight:correctness}. Thus, we focus on the Secrecy property.
    
    \paragraph{Secrecy.}
    We first prove that, for any \round $\rd\in\N$, if an honest party outputs $(\rd,\m,\sigma)$ then $\sigma=\eval(s,(\rd,\m))$.
    By the Agreement and Termination properties of \mbbft, all honest parties output $\fin(\rd,\m)$ in \mbbft.
    By~\cref{def:ft}, 
    \begin{enumerate}
        \item There exist $\thf$ parties (or equivalently $\thf-|\cC|$ honest parties) that have sent $(\PREFIN,\rd,\m)$.
        \item If $\m\neq\bot$, then no honest party sent $(\PREFIN,\rd,\m')$ such that $\m'\in\cM$ and $\m'\ne\m$. 
        \item If $\m=\bot$, then there exists at most one $\m'\in\cM$ such that any honest party may send $(\PREFIN,\rd,\m')$. 
    \end{enumerate}

    For the sake of contradiction, suppose that for some $\rd\in\N$, an honest party $h$ outputs $(\rd,\m,\sigma')$ where $\sigma'\ne\eval(s,(\rd,\m))$. 
    As per the protocol and the Robustness property of \tc, $\sigma'=\eval(s,(\rd,\m'))$ for some $\m'\ne\m$. 
    Also, as per the protocol and the Robustness property of \tc, $\sigma'=\eval(s,(\rd,\m'))$ for some $\m'\ne\m$. 
    There are two cases:
    
    First, if $\sigma'$ is obtained from the \slow, then by the Unpredictability property of \tc, at least $\ths-|\cC|\ge 1$ honest parties have revealed their $\peval(\sch{s},(\rd,\m'))$ for the \slow. According to the protocol, these honest parties output $\fin(\rd,\m')$, which violates the Agreement property of \mbbft since an honest party outputs $\fin(\rd,\m)$, contradiction.
    
    Second, if $\sigma'$ is obtained from the \fast, then by the Unpredictability property of \tc, at least $\ths'-|\cC|$ honest parties reveal their $\peval(\sch{s}',(\rd,\m'))$ for the \fast. 
    If $\m\neq\bot$, then no honest party should have sent $(\PREFIN,\rd,\m')$ such that $\m'\in\cM$ and $\m'\ne\m$, contradiction. If $\m=\bot$, then at least $\ths'-|\cC|=\thf-|\cC|$ honest parties has sent $(\PREFIN,\rd,\m')$ in \mbbft. This implies that the message $\m'$ is globally finalized, again violating the Agreement property of \mbbft, hence a contradiction. 
    Therefore, for any \round $\rd\in\N$, if an honest party outputs $(\rd,\m,\sigma)$ then $\sigma=\eval(s,(\rd,\m))$.

    For any corrupted party, the same argument above applies; thus the adversary cannot learn $\eval(s,(\rd,\m'))$ where $\m'\ne\m$ 
    \qed
\end{proof}

\begin{theorem}
    \Cref{alg:ramp} achieves $\gftr=\grtr$ for \round $\rd$ if the execution is synchronized for $\rd$. 
\end{theorem}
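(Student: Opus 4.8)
The plan is to show that in a synchronous execution for round $\rd$, the fast path reconstructs the \tc output at exactly the global finalization time $\gftr$, so that $\grtr\le\gftr$; combined with \Cref{lem:impos}, which gives $\grtr\ge\gftr$ unconditionally, this yields $\gftr=\grtr$. First I would recall the definition of a synchronous execution (\Cref{def:syncexe}): all honest parties prefinalize the same message $\m\in\cM$ for round $\rd$ at the same physical time, call it $\tau$. Then I would invoke the finalization threshold property of \mbbft\ (\Cref{def:ft}): $\m$ is globally finalized for $\rd$ exactly when $\thf$ parties have called $\prefin(\rd,\m)$. Since honest parties alone may not reach $\thf$ instantly, I need to argue that $\gftr$ is the earliest time at which the set of parties having called $\prefin(\rd,\m)$ reaches size $\thf$; in a synchronous execution all $|\cH|=n-|\cC|$ honest parties reach this state at time $\tau$, and the corrupt parties may have prefinalized earlier or at $\tau$, so $\gftr\le\tau$, and in fact $\gftr$ is the moment the threshold is first met.

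Next I would track the fast-path shares. By \Cref{alg:ramp}, each honest party sends $(\texttt{FAST-SHARE},\rd,\m,\peval(\sch{s}'_i,(\rd,\m)))$ together with its $(\PREFIN,\rd,\m)$ message, hence at time $\tau$ for every honest party. The key point is that the adversary receives messages instantaneously (per the system model), so at the global finalization time $\gftr$ — which is the moment the $\thf$-th $\prefin(\rd,\m)$ call occurs — the adversary has seen $\PREFIN$ messages, and therefore \texttt{FAST-SHARE} messages, from all $\thf-|\cC|$ honest parties among that set, plus it controls its own $|\cC|$ shares, for a total of at least $\thf$ fast-path shares. Since the fast-path reconstruction threshold satisfies $\thr'\le n$ and — this is the crucial inequality I need to check — we have chosen $\ths'=\thf$ and $\thr'$ can be as large as $n$, I must confirm that $\thf$ shares suffice, i.e. that $\thr'\le\thf$? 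That is false in general since $\thr'>\ths'=\thf$. So I need to be more careful: in a synchronous execution \emph{all} honest parties prefinalize at $\tau$, so the number of honest fast-path shares available is $n-|\cC|$, and the adversary can contribute $|\cC|$ more; since $\thr'\le n$, exactly $n$ shares are available at $\tau$, which is $\ge\thr'$. Moreover $\gftr$ is reached no later than $\tau$ because $n-|\cC|\ge\thf$ (as $\thf\le n-t\le n-|\cC|$), so in fact $\gftr$ occurs at or before $\tau$; I would argue $\gftr=\tau$ only if corrupt parties' earlier prefinalizations don't push it earlier, but for an upper bound on $\grtr$ it suffices that at time $\gftr$ the adversary already holds $\ge\thr'$ fast-path shares. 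Concretely: at $\gftr$, $\thf$ parties have called $\prefin$; of these at least $\thf-|\cC|$ are honest and have broadcast \texttt{FAST-SHARE}; combined with the $|\cC|$ corrupt shares the adversary holds $\ge\thf\ge$ ... hmm, still need $\ge\thr'$.

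Let me restructure: the honest parties' prefinalization time $\tau$ is $\ge\gftr$ only if honest parties alone are needed; but actually since in a synchronous execution honest parties all prefinalize simultaneously at $\tau$ and $|\cH|\ge\thf$, global finalization happens at or before $\tau$, and the adversary — receiving messages instantaneously — at time $\max(\gftr,\text{time corrupt shares available})$ can assemble shares. The cleanest route: show $\grtr\le\tau$ and $\gftr=\tau$. For $\gftr=\tau$: before $\tau$ no honest party has prefinalized, so at most $|\cC|<\thf$ parties have, hence $\m$ is not globally finalized before $\tau$; at $\tau$ all honest parties prefinalize, reaching $\ge\thf$, so $\gftr=\tau$. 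For $\grtr\le\tau$: at $\tau$ all $n-|\cC|$ honest \texttt{FAST-SHARE} messages are sent and instantly available to the adversary, who with its own $|\cC|$ shares has $n\ge\thr'$ valid fast-path shares, so by Robustness of \tc\ it reconstructs $\eval(s,(\rd,\m))$ at $\tau$, giving $\grtr\le\tau=\gftr$. With \Cref{lem:impos} giving $\grtr\ge\gftr$, we conclude $\grtr=\gftr$. The main obstacle is handling the fast-path threshold correctly: one must use that \emph{all} honest parties (not just $\thf$ of them) prefinalize at the same instant in a synchronous execution, so that $n$ total shares are available and $\thr'\le n$ suffices; conflating this with the $\thf$-sized finalizing quorum would break the argument. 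I would write the proof along these lines, being explicit that $\gftr=\tau$ and that the instantaneous adversarial receipt plus $\thr'\le n$ gives reconstruction at $\tau$.

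\begin{proof}
    Consider an execution that is synchronous for round $\rd$. By \Cref{def:syncexe}, all honest parties prefinalize the same message $\m\in\cM$ for $\rd$ at the same physical time; denote this time $\tau$. We show $\gftr=\tau$ and $\grtr\le\tau$; combined with \Cref{lem:impos} this yields $\gftr=\grtr$.

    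\emph{$\gftr=\tau$.} Before time $\tau$ no honest party has called $\prefin(\rd,\m)$, so at most $|\cC|$ parties have. Since $\thf\le n-t$ and $|\cC|\le t$, we have $|\cC|<\thf$, so by \Cref{def:ft} the message $\m$ is not globally finalized for $\rd$ before $\tau$. At time $\tau$, all $n-|\cC|\ge n-t\ge\thf$ honest parties have called $\prefin(\rd,\m)$, so at least $\thf$ parties have; hence $\m$ is globally finalized for $\rd$ at $\tau$, i.e.\ $\gftr=\tau$.

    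\emph{$\grtr\le\tau$.} By \Cref{alg:ramp}, upon calling $\prefin(\rd,\m)$ each honest party $i$ sends $(\texttt{FAST-SHARE},\rd,\m,\peval(\sch{s}'_i,(\rd,\m)))$ to all parties; since in the synchronous execution every honest party does so at time $\tau$ and the adversary receives messages instantaneously, the adversary holds all $n-|\cC|$ honest fast-path shares at time $\tau$. Together with the $|\cC|$ shares of the corrupt parties, the adversary holds $n$ valid fast-path shares of $\peval(\sch{s}',(\rd,\m))$ at $\tau$. Since the fast-path reconstruction threshold satisfies $\thr'\le n$, by the Robustness property of \tc\ the adversary reconstructs $\eval(s,(\rd,\m))$ at time $\tau$. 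Hence $\grtr\le\tau=\gftr$.

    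By \Cref{lem:impos}, $\grtr\ge\gftr$. Therefore $\gftr=\grtr$.
    \qed
\end{proof}
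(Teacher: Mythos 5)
Your proof is correct and follows essentially the same route as the paper's: all honest parties reveal their fast-path shares at the common prefinalization time, which together with the corrupt parties' shares gives $n\ge\thr'$ shares for the adversary to reconstruct at that moment, and \Cref{lem:impos} supplies the reverse inequality. You are somewhat more careful than the paper in explicitly verifying that the common prefinalization time $\tau$ actually equals $\gftr$ and that the fast-path threshold arithmetic ($\thr'\le n$, not $\thr'\le\thf$) is what makes reconstruction possible, which the paper leaves implicit.
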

\begin{proof}
    Consider any synchronized execution. 
    By the definition of synchronized execution, all honest parties prefinalize the same message $\m$ for \round\ $\rd$ at the same physical time $\gftr$. 
    Thus, as per ~\Cref{alg:ramp}, all honest parties reveal their \tc output shares for $(\rd,\m)$ at time $\gftr$ in the \fast, which allows the adversary to reconstruct the \tc output at time $\gftr$, which implies $\grtr\leq\gftr$. 
    By~\Cref{lem:impos}, $\grtr\geq \gftr$.
    Therefore, we conclude that $\gftr=\grtr$ for any \rd in any synchronized execution. 
    \qed
\end{proof}

\begin{theorem}\label{thm:ramp:latency}
    \Cref{alg:ramp} achieves $\latency_{\rd}=0$ for any \round $\rd\in\N$ in any optimistic execution.
\end{theorem}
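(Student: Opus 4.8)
The plan is to reuse the structure of the proof of \Cref{thm:tight:latency}, the key new observation being that in an optimistic execution the \emph{fast-path} reconstruction of \Cref{alg:ramp} is triggered by exactly the same set of messages, sent at exactly the same physical time and delivered with exactly the same delay, as the \mbbft\ finalization. So first I would fix a \round\ $\rd$ and an optimistic execution (\Cref{def:optexe}): all $n$ parties are honest (hence $\cC=\emptyset$, $\cH=[n]$), the execution is synchronous for $\rd$ (\Cref{def:syncexe}), and all messages share a common delay $\delta$. By synchrony there is a single physical time $\tau$ at which every party calls $\prefin(\rd,\m)$ for the same $\m\in\cM$; by \Cref{alg:ramp} each party therefore sends both $(\PREFIN,\rd,\m)$ and $(\texttt{FAST-SHARE},\rd,\m,\sigma'_i)$ at time $\tau$, where $\sigma'_i=\peval(\sch{s}'_i,(\rd,\m))$.

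The first real step is to show that every honest party finalizes \emph{and} fast-path-reconstructs at the same instant $\tau+\delta$. Since all $n$ messages of each type are sent at $\tau$ and delivered $\delta$ later, at time $\tau+\delta$ every honest party has collected $n$ copies of $(\PREFIN,\rd,\m)$ and $n$ valid fast-path shares (its own together with the $n-1$ it receives); here I use that the Robustness property of \tc\ guarantees $\pver$ accepts every honestly generated share. Since $\ths'=\thf\le n$ and $\thr'\le n$, each honest party at $\tau+\delta$ simultaneously collects $\thf$ \PREFIN\ messages --- hence calls $\fin(\rd,\m)$, pushing $\rd$ onto $\queue$ and recording $\m$ --- and collects $\thr'$ valid fast-path shares --- hence sets $\bm{\sigma}[\rd]\gets\comb(\cdot)$. (The slow path is irrelevant: it only fires after $\fin(\rd,\m)$ and then finds $\bm{\sigma}[\rd]$ already set.) Thus for every $i\in\cH$ the finalization time $T_{i,\rd}^{\sf F}$ equals $\tau+\delta$, and $\bm{\sigma}[\rd]$ becomes defined at party $i$ also at $\tau+\delta$.

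It remains to pass from ``$\bm{\sigma}[\rd]$ ready'' to ``party $i$ outputs $(\rd,\m,\bm{\sigma}[\rd])$'', which is governed by the FIFO $\queue$ and which I would handle by induction on $\rd$, exactly as in \Cref{thm:tight:latency}. For the base case $\rd=0$, the queue top is $0$, so the output is emitted at $\tau+\delta$, giving $\latency_{0}=0$. For the inductive step, the Total Order property of \mbbft\ forces rounds to enter $\queue$ in increasing order, and the induction hypothesis ($\latency_{\rd'}=0$ for all $\rd'<\rd$) guarantees every such round was popped at the moment it was finalized --- hence strictly before $\rd$ is finalized --- so $\rd$ sits at the top of $\queue$ when $\bm{\sigma}[\rd]$ becomes ready; party $i$ therefore outputs at time $\tau+\delta=T_{i,\rd}^{\sf F}$, so $T_{i,\rd}^{\sf O}-T_{i,\rd}^{\sf F}=0$, and taking the maximum over $i\in\cH$ yields $\latency_{\rd}=0$. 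I expect the only mildly delicate point to be this last $\queue$-bookkeeping step --- arguing that no later round can sit above $\rd$ and that all earlier rounds have already been popped --- but it is essentially verbatim the corresponding part of \Cref{thm:tight:latency}, so no new idea is needed; everything else is a direct consequence of ``same parties, same prefinalization time, same delay'' together with $\ths'=\thf$ and $\thr'\le n$.
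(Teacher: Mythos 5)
Your proposal is correct and follows essentially the same route as the paper's proof: observe that in an optimistic execution all honest parties prefinalize (and hence send both \PREFIN\ and fast-path shares) at the same instant, so $\thf$ \PREFIN\ messages and $\thr'$ valid fast-path shares arrive simultaneously, making local finalization and reconstruction coincide, and then close with the same FIFO-queue induction as in \Cref{thm:tight:latency}. Your write-up is somewhat more explicit than the paper's (e.g., tracking the common delay $\delta$ and counting a party's own share so that $\thr'\le n$ suffices), but no new idea is involved.
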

\begin{proof}
    Consider any optimistic execution and any \round $\rd\in\N$.
    As per \Cref{alg:ramp}, honest parties prefinalize a message and send its \tc output share for the \fast simultaneously. Recall from the definition of optimistic execution, in optimistic executions, all messages have the same delay, and all parties are honest and finalize the same message at the same physical time. Therefore, all honest parties locally receive $\thf$ $\PREFIN$ messages and $\thr$ \tc\ output shares for $(\rd,\m)$ simultaneously. This implies that the local finalization and reconstruction occur simultaneously. 
    Then, by a similar induction argument as in~\Cref{thm:tight:latency}, $\latency_{\rd}=0$ for any \round $\rd\in\N$ in any optimistic execution.
    \qed
\end{proof}

In practice, different honest parties may prefinalize at different physical times due to a lack of synchrony. Moreover, honest parties may need to wait slightly longer after local finalization to receive additional shares from the \fast since the reconstruction threshold of the \fast is higher than the finalization threshold. 
Despite this,  the latency of~\Cref{alg:ramp} remains significantly lower than one message delay. The evaluation in~\Cref{app:rand:eval} demonstrates that the \fast reduces the latency overhead by $71\%$ compared to the \slow (\Cref{alg:strawman}), which has a latency of one message delay.

\section{Distributed Randomness: A Case Study}
\label{app:rand}

\subsection{Overview of \posrand~\cite{das2024distributed}}
\label{app:rand:overview}
\posrand~\cite{das2024distributed} is a distributed randomness protocol for proof-of-stake blockchains where each party has a stake, and the blockchain is secure as long as the adversary corrupts parties with combined stake less than $1/3$-th of the total stake. 

\para{Rounding.}
Note that the total stake in practice can be very large. For example, as of July 2024, the total stake in the Aptos blockchain~\cite{aptos-stake} exceeds $8.7\times 10^8$. Since a distributed randomness protocol with such a large total stake will be prohibitively expensive,\cite{das2024distributed} assigns approximate stakes of parties to a much smaller value called {\em weights}, and this process is called {\em rounding}~\footnote{The rounding is only used for the \tc protocol. The \mbb (consensus) protocol is still based on accurate stakes.}.
Briefly, the rounding algorithm in~\cite{das2024distributed} defines a function ${\sf Round}(\bm{S}, \ths, \thr)\rightarrow (\bm{W}, w)$ that inputs the stake distribution $\bm{S}$ of the parties before rounding, the secrecy and reconstruction threshold $\ths, \thr$ (in stakes), and outputs the weight distribution $\bm{W}$ of the parties after rounding, and the weight threshold $w$. 
The rounding algorithm guarantees that
any subset of parties with a combined stake $<\ths$ will always have a combined weight $< w$, thus preventing them from reconstructing the \tc output; and
any subset of parties with a combined stake $\geq\thr$ will always have a combined weight $\geq w$, allowing them to reconstruct the \tc output. We refer the reader to~\cite{das2024distributed} for the concrete implementation of the ${\sf Round}$ algorithm.

Parties in~\cite{das2024distributed} then participates in a publicly verifiable secret sharing~(PVSS) based distributed key generation~(DKG) protocol to receive secret shares of a \tc\ secret $s$. 

\one
\para{Randomness generation.}
\posrand~\cite{das2024distributed} implements the weighted extension~\footnote{The weighted extension of \Cref{alg:strawman} is where each party has a stake and the threshold check (line 4 of the reconstruction phase) is based on the stake sum instead of the number of parties. However, the implementation can check the combined weight against the weight threshold $w$, instead of checking the stakes. } of \slow (\Cref{alg:strawman}) for \longname (\Cref{def:btc}), where they use a distributed verifiable unpredictable function~(VUF) as the \tc\ protocol. More precisely, for each finalized block, each party computes and reveals its VUF shares. Next, once a party receives verified VUF shares from parties with combined weights greater than or equal to $w$, it reconstructs the VUF output as its \tc\ output.

\subsection{Implementation}
\label{app:rand:impl}
Now we describe our implementation of \fast (\Cref{alg:ramp}) atop~\cite{das2024distributed}.

\one
\para{Consensus.}
Here we explain how the consensus in~\cite{das2024distributed} satisfies the definition of \mbbft (\Cref{def:ft}).
\posrand~\cite{das2024distributed} is built on the \aptos blockchain, which uses a latency-improved version~\cite{optjolteon} of \jolteon~\cite{gelashvili2022jolteon} as its consensus protocol. 
\jolteon tolerates an adversary capable of corrupting parties holding up to $t$ stakes out of a total of $n=3t+1$ stakes under partial synchrony. 
In \jolteon, each message $\m$ in \mbbft is a block containing transactions, and only one party (called leader) can propose a block for each round.
\begin{itemize}
    \item {\em Prefinalization.}
    A party at local round $\leq \rd$ sends $(\PREFIN,\rd,\m)$ upon receiving a quorum certificate (QC) for $\m$ of round $\rd$, where a QC consists of votes from parties with combined stakes of $2/3$ of the total stakes.
    When a party prefinalizes a block $\m$ of round $\rd$, it also implicitly prefinalizes all previous rounds that are not yet finalized (either parent blocks or $\bot$).
    In \jolteon, each party votes for at most one block per round, so that at most one QC can be formed for each round by quorum intersection.
    This means that for each round, all honest parties can only prefinalize at most one block. 
    \item {\em Finalization.} 
    A block $\m$ is globally finalized for round $\rd$ if and only if parties with combined $\thf=2t+1$ stakes (or $\thf-t=t+1$ honest stakes) send $(\PREFIN,\rd,\m)$. 
    A party locally finalizes $\m$ for round $\rd$ once it receives $(\PREFIN,\rd,\m)$ messages from parties with $\thf$ stakes, which also finalizes all previous rounds with parent blocks or $\bot$.
\end{itemize}

\para{Rounding.}
Recall that the \fast (\Cref{alg:ramp}) requires sharing the same secret two sets of thresholds, i.e., $\ths<\thr$ for the \slow and $\ths'<\thr'$ for the \fast. 
Consequently, we augment the rounding algorithm of~\cite{das2024distributed} to additionally take $\ths',\thr'$ as input, and output the weight threshold $w'$ for the \fast. 
That is, ${\sf Round'}(\bm{S}, \ths, \thr, \ths', \thr')\rightarrow (\bm{W}, w, w')$ where $w,w'$ are the weight thresholds of the \slow and \fast, respectively. ${\sf Round'}$ also provides the same guarantees for the thresholds of \fast, namely 
any subset of parties with a combined stake $<\ths'$ will always have a combined weight $< w'$, and
any subset of parties with a combined stake $\geq\thr'$ will always have a combined weight $\geq w'$.
As in~\Cref{alg:ramp},  $\ths'=\thf, \thr'=\min(\ths'+(\thr-\ths), n)$.

\one
\para{Distributed key generation.}
To setup the secret-shares of the \tc\ secret, we use the DKG protocol of~\cite{das2024distributed} with the following minor modifications. Each party starts by sharing the same secret independently using two weight thresholds $w$ and $w'$. The rest of the DKG protocol is identical to~\cite{das2024distributed}, except parties agree on two different aggregated PVSS transcript instead of one. Note that, these doubles the computation and communication cost of DKG.

\one
\para{Randomness generation.}
As described in~\Cref{alg:ramp}, parties reveal their VUF shares (\tc output shares) for the \fast upon prefinalizing a block, and for the \slow upon finalizing a block.
For both paths, the parties collect the VUF shares and are ready to execute the block as soon as the randomness (\tc output) is reconstructed from either path. 
As in~\cite{das2024distributed}, parties use the VUF output evaluated on the secret $s$ and the round (and epoch) number of the finalized block as the \tc\ output~\footnote{In the implementation of~\Cref{alg:ramp} with \jolteon as the \mbb protocol, it is safe to omit $\m$ in the evaluation function, since all honest parties always prefinalize the same block for each round as mentioned in section {\em consensus}.}.
Since the parties are weighted, similar to~\Cref{app:rand:overview}, we implement the weighted extension of \Cref{alg:ramp}.

\one
\para{Implementation Details.}
We implement \fastrand in Rust, atop the open-source \posrand~\cite{das2024distributed} implementation~\cite{aptos} on the \aptos blockchain.
We worked with \aptos to deploy our protocol on the \aptos blockchain.
For communication, we use the ${\tt Tokio}$ library~\cite{tokio}.
For cryptography, we use the ${\tt blstrs}$ library~\cite{blstrs}, which implements efficient finite field and elliptic curve arithmetic. Similar to \posrand~\cite{das2024distributed}, our implementation runs the share verification step of different parties in parallel and parallelizes the VUF derivation using multi-threading. 

\subsection{Evaluation Setup}
\label{app:rand:eval}

As of July 2024, the \aptos blockchain is run by $140$ validators, distributed $50$ cities across $22$ countries with the stake distributed described in~\cite{aptos-geo}. 
The $50$-th, $70$-th and $90$-th percentile (average) of round-trip latency between the blockchain validators is approximately $150$ms, $230$ms, and $400$ms, respectively.

Let $n$ denote the total stake before rounding, which is approximately $8.7\times 10^8$. 
The secrecy and reconstruction thresholds (in stakes) for the \slow are $\ths=0.5n$ and $\thr=0.66n$, respectively. 
The secrecy and reconstruction thresholds (in stakes) for the \fast are $\ths'=0.67n$ and $\thr'=0.83n$, respectively.
The total weight of the mainnet validators after rounding is $244$.
The weight threshold for the \slow is $w=143$, and that for the \fast is $w'=184$.

Most of the \aptos validators use the following recommended hardware specs~\cite{aptos-node}.
\begin{itemize}
    \item CPU: 32 cores, 2.8GHz or faster, AMD Milan EPYC or Intel Xeon Platinum.
    \item Memory: 64GB RAM.
    \item Storage: 2T SSD with at least 60K IOPS and 200MiB/s bandwidth.
    \item Network bandwidth: 1Gbps.
\end{itemize}

\para{Evaluation metrics.}
We measure the {\em randomness latency} as the duration required to generate randomness for each block, as in~\Cref{def:latency}.
It measures the duration from the moment the block is finalized by consensus to the when the randomness for that block becomes available. 
We report the average randomness latency (measured over a period of 12 hours).
We also measure and compare the setup overhead for \posrand~\cite{das2024distributed} and \fastrand, using microbenchmarks on machines of the same hardware specs as the \aptos mainnet.
As mentioned in~\Cref{app:rand:impl}, \fastrand requires the DKG to share the same secret in two different thresholds, resulting in approaximately twice the cost compared to \posrand~\cite{das2024distributed}.
We also measure the end-to-end DKG latency of \fastrand on the \aptos mainnet. 

\subsection{Evaluation Results.}
\begin{table}[t!]
    \centering
    \small
    \setlength\tabcolsep{3pt}
    \begin{tabular}{ccccc}
        \toprule
        \multirow{2}{*}{Scheme} & \multicolumn{3}{c}{DKG latency} & \multirow{2}{*}{\makecell{Transcript \\ size}}  \\
        \cmidrule{2-4}
        & deal & verify & aggregate \\
        \midrule
        \posrand~\cite{das2024distributed} & 190.2 & 171.8 & 1.6 & 80,021 bytes \\
        \fastrand & 377.4 & 351.6 & 3.1 & 160,041 bytes \\
        \bottomrule
    \end{tabular}
    \vspace{5mm}
    \caption{Setup overhead of \posrand~\cite{das2024distributed} and our \fastrand in microbenchmarks. 
    The latency unit is millisecond.}
    \label{tab:dkg}
\end{table}

\one
\para{Setup overhead.}
We report the computation costs and the transcript sizes of the setup in~\Cref{tab:dkg}. 
The end-to-end DKG latency for setting up \fastrand on \aptos mainnet is $16.8$ seconds. This includes $14.6$ seconds for transcript dealing, dissemination, and aggregation, and $2.2$ seconds for agreeing on the aggregated transcript. 
As observed in~\Cref{tab:dkg}, the computation overhead of DKG is a relatively small proportion of the end-to-end latency. As each party verifies the transcripts of other parties in parallel, the main bottleneck is the dissemination of the transcripts. 

Note that the setup overhead occurs only during the initial setup or when the set of parties changes (which happens every few hours or days), and does not affect the blockchain's end-to-end latency as the setup is performed asynchronously to blockchain transaction processing.
In contrast, the randomness latency increases the latency of every transaction. 
Thus, we believe that the significantly improved randomness latency at the cost of higher setup overhead is a reasonable trade off.

\section{Discussions}
\label{app:discussion}
As we discussed in~\Cref{sec:intro}, our results applies to many other threshold cryptosystems that are natively integrated into blockchains. Next, we provide one more specific example.

\one
\para{Threshold decryption for privacy.}
To counteract MEV and enhance privacy guarantees, numerous recent papers propose to use threshold decryption to hide transaction contents in a block until the block is finalized by the consensus protocol~\cite{reiter1994securely,yang2022sok,asayag2018fair,zhang2023f3b,momeni2022fairblock,kavousi2023blindperm,malkhi2022maximal}. 
These proposals work as follows. The blockchain validators start by secret sharing a decryption key using threshold secret sharing. Clients submit encrypted transactions to the blockchain, and the validators collectively decrypt the transactions in a block once the block is finalized. Our solutions (\Cref{alg:tight} and~\Cref{alg:ramp}) can also be used to improve the latency of such proposals that rely on non-interactive threshold decryption. However, there is a subtle issue that needs to be addressed first. Recall that in the definition of \longname, the \tc output is evaluated with respect to $(\rd,\m)$ where the message \m\ is finalized by \round \rd in \MBB. Therefore, each transaction must be encrypted for a specific round and finalized for that round in \MBB. We leave addressing this issue as an interesting open research direction.

\fi

\end{document}